\documentclass[aps,pra,twocolumn,notitlepage,superscriptaddress,showpacs,nofootinbib]{revtex4-1}
\usepackage{amsmath}
\usepackage{amssymb}
\usepackage{amsfonts}
\usepackage{enumerate}
\usepackage{mathrsfs}
\usepackage{graphicx}
\usepackage{epstopdf}
\usepackage{enumerate}
\usepackage{changepage}
\usepackage{bm}
\usepackage{multirow}
\usepackage{lipsum}
\usepackage{booktabs}

\usepackage{tikz}
\usepackage{caption}
\usepackage{subcaption}



\newtheorem{theorem}{Theorem}
\newtheorem{definition}{Definition}

\newtheorem{lemma}{Lemma}
\newtheorem{proposition}{Proposition}
\newtheorem{conjecture}{Conjecture}
\newtheorem{example}{Example}
\newtheorem{corollary}{Corollary}

\def\bcj{\begin{conjecture}}
	\def\ecj{\end{conjecture}}
\def\bcr{\begin{corollary}}
	\def\ecr{\end{corollary}}
\def\bd{\begin{definition}}
	\def\ed{\end{definition}}
\def\bea{\begin{eqnarray}}
	\def\eea{\end{eqnarray}}
\def\bem{\begin{enumerate}}
	\def\eem{\end{enumerate}}
\def\bex{\begin{example}}
	\def\eex{\end{example}}
\def\bim{\begin{itemize}}
	\def\eim{\end{itemize}}
\def\bl{\begin{lemma}}
	\def\el{\end{lemma}}
\def\bma{\begin{bmatrix}}
	\def\ema{\end{bmatrix}}
\def\bpf{\begin{proof}}
	\def\epf{\end{proof}}
\def\bpp{\begin{proposition}}
	\def\epp{\end{proposition}}
\def\bqu{\begin{question}}
	\def\equ{\end{question}}
\def\br{\begin{remark}}
	\def\er{\end{remark}}
\def\bt{\begin{theorem}}
	\def\et{\end{theorem}}


\def\squareforqed{\hbox{\rlap{$\sqcap$}$\sqcup$}}
\def\qed{\ifmmode\squareforqed\else{\unskip\nobreak\hfil
		\penalty50\hskip1em\null\nobreak\hfil\squareforqed
		\parfillskip=0pt\finalhyphendemerits=0\endgraf}\fi}
\def\endenv{\ifmmode\;\else{\unskip\nobreak\hfil
		\penalty50\hskip1em\null\nobreak\hfil\;
		\parfillskip=0pt\finalhyphendemerits=0\endgraf}\fi}
\newenvironment{proof}{\noindent \textbf{{Proof.~} }}{\qed}
\def\Dbar{\leavevmode\lower.6ex\hbox to 0pt
	{\hskip-.23ex\accent"16\hss}D}
\makeatletter
\def\url@leostyle{%
	\@ifundefined{selectfont}{\def\UrlFont{\sf}}{\def\UrlFont{\small\ttfamily}}}
\makeatother
\urlstyle{leo}

\def\bcj{\begin{conjecture}}
	\def\ecj{\end{conjecture}}
\def\bcr{\begin{corollary}}
	\def\ecr{\end{corollary}}
\def\bd{\begin{definition}}
	\def\ed{\end{definition}}
\def\bea{\begin{eqnarray}}
	\def\eea{\end{eqnarray}}
\def\bem{\begin{enumerate}}
	\def\eem{\end{enumerate}}
\def\bex{\begin{example}}
	\def\eex{\end{example}}
\def\bim{\begin{itemize}}
	\def\eim{\end{itemize}}
\def\bl{\begin{lemma}}
	\def\el{\end{lemma}}
\def\bpf{\begin{proof}}
	\def\epf{\end{proof}}
\def\bpp{\begin{proposition}}
	\def\epp{\end{proposition}}
\def\bqu{\begin{question}}
	\def\equ{\end{question}}
\def\br{\begin{remark}}
	\def\er{\end{remark}}
\def\bt{\begin{theorem}}
	\def\et{\end{theorem}}

\def\btb{\begin{tabular}}
	\def\etb{\end{tabular}}

	\newcommand{\nc}{\newcommand}
	

	\nc{\bbA}{\mathbb{A}} \nc{\bbB}{\mathbb{B}} \nc{\bbC}{\mathbb{C}}
	\nc{\bbD}{\mathbb{D}} \nc{\bbE}{\mathbb{E}} \nc{\bbF}{\mathbb{F}}
	\nc{\bbG}{\mathbb{G}} \nc{\bbH}{\mathbb{H}} \nc{\bbI}{\mathbb{I}}
	\nc{\bbJ}{\mathbb{J}} \nc{\bbK}{\mathbb{K}} \nc{\bbL}{\mathbb{L}}
	\nc{\bbM}{\mathbb{M}} \nc{\bbN}{\mathbb{N}} \nc{\bbO}{\mathbb{O}}
	\nc{\bbP}{\mathbb{P}} \nc{\bbQ}{\mathbb{Q}} \nc{\bbR}{\mathbb{R}}
	\nc{\bbS}{\mathbb{S}} \nc{\bbT}{\mathbb{T}} \nc{\bbU}{\mathbb{U}}
	\nc{\bbV}{\mathbb{V}} \nc{\bbW}{\mathbb{W}} \nc{\bbX}{\mathbb{X}}
	\nc{\bbZ}{\mathbb{Z}}
	
	
	\nc{\bA}{{\bf A}} \nc{\bB}{{\bf B}} \nc{\bC}{{\bf C}}
	\nc{\bD}{{\bf D}} \nc{\bE}{{\bf E}} \nc{\bF}{{\bf F}}
	\nc{\bG}{{\bf G}} \nc{\bH}{{\bf H}} \nc{\bI}{{\bf I}}
	\nc{\bJ}{{\bf J}} \nc{\bK}{{\bf K}} \nc{\bL}{{\bf L}}
	\nc{\bM}{{\bf M}} \nc{\bN}{{\bf N}} \nc{\bO}{{\bf O}}
	\nc{\bP}{{\bf P}} \nc{\bQ}{{\bf Q}} \nc{\bR}{{\bf R}}
	\nc{\bS}{{\bf S}} \nc{\bT}{{\bf T}} \nc{\bU}{{\bf U}}
	\nc{\bV}{{\bf V}} \nc{\bW}{{\bf W}} \nc{\bX}{{\bf X}}
	\nc{\ba}{{\bf a}} \nc{\be}{{\bf e}} \nc{\bu}{{\bf u}}
	\nc{\brr}{{\bf r}} \nc{\bx}{{\bf x}}
	
	
	\nc{\cA}{{\cal A}} \nc{\cB}{{\cal B}} \nc{\cC}{{\cal C}}
	\nc{\cD}{{\cal D}} \nc{\cE}{{\cal E}} \nc{\cF}{{\cal F}}
	\nc{\cG}{{\cal G}} \nc{\cH}{{\cal H}} \nc{\cI}{{\cal I}}
	\nc{\cJ}{{\cal J}} \nc{\cK}{{\cal K}} \nc{\cL}{{\cal L}}
	\nc{\cM}{{\cal M}} \nc{\cN}{{\cal N}} \nc{\cO}{{\cal O}}
	\nc{\cP}{{\cal P}} \nc{\cQ}{{\cal Q}} \nc{\cR}{{\cal R}}
	\nc{\cS}{{\cal S}} \nc{\cT}{{\cal T}} \nc{\cU}{{\cal U}}
	\nc{\cV}{{\cal V}} \nc{\cW}{{\cal W}} \nc{\cX}{{\cal X}}
	\nc{\cZ}{{\cal Z}}
	
	
	\nc{\hA}{{\hat{A}}} \nc{\hB}{{\hat{B}}} \nc{\hC}{{\hat{C}}}
	\nc{\hD}{{\hat{D}}} \nc{\hE}{{\hat{E}}} \nc{\hF}{{\hat{F}}}
	\nc{\hG}{{\hat{G}}} \nc{\hH}{{\hat{H}}} \nc{\hI}{{\hat{I}}}
	\nc{\hJ}{{\hat{J}}} \nc{\hK}{{\hat{K}}} \nc{\hL}{{\hat{L}}}
	\nc{\hM}{{\hat{M}}} \nc{\hN}{{\hat{N}}} \nc{\hO}{{\hat{O}}}
	\nc{\hP}{{\hat{P}}} \nc{\hR}{{\hat{R}}} \nc{\hS}{{\hat{S}}}
	\nc{\hT}{{\hat{T}}} \nc{\hU}{{\hat{U}}} \nc{\hV}{{\hat{V}}}
	\nc{\hW}{{\hat{W}}} \nc{\hX}{{\hat{X}}} \nc{\hZ}{{\hat{Z}}}
	
	\nc{\hn}{{\hat{n}}}

	\newcommand{\ket}[1]{|#1\rangle}

	\newcommand{\braket}[2]{\langle#1|#2\rangle}

	\newcommand{\fl}[2]{\left\lfloor\frac{#1}{#2}\right\rfloor}
	\newcommand{\fc}[2]{\left\lceil\frac{#1}{#2}\right\rceil}


	


	
	
	
	\usepackage[
	colorlinks,
	linkcolor = blue,
	citecolor = blue,
	urlcolor = blue]{hyperref}
	\def \qed {\hfill \vrule height7pt width 7pt depth 0pt}
	
	\setcounter{MaxMatrixCols}{10}



	\newcounter{lastnote}

	
\begin{document}
		\title{Unextendible product bases from orthogonality graphs}
	\author{Fei Shi}
\affiliation{QICI Quantum Information and Computation Initiative, Department of Computer Science,
The University of Hong Kong, Pokfulam Road, Hong Kong}	

\author{Ge Bai}
	\affiliation{QICI Quantum Information and Computation Initiative, Department of Computer Science,
The University of Hong Kong, Pokfulam Road, Hong Kong}

\author{Xiande Zhang}
\affiliation{School of Mathematical Sciences,
	University of Science and Technology of China, Hefei, 230026,  China}
 \affiliation{Hefei National Laboratory, University of Science and Technology of China, Hefei 230088, People’s Republic of China}
		
\author{Qi Zhao}
\affiliation{QICI Quantum Information and Computation Initiative, Department of Computer Science,
The University of Hong Kong, Pokfulam Road, Hong Kong}		

\author{Giulio Chiribella}
\affiliation{QICI Quantum Information and Computation Initiative, Department of Computer Science,
The University of Hong Kong, Pokfulam Road, Hong Kong}	 
\affiliation{Department of Computer Science, Parks Road, Oxford, OX1 3QD, United Kingdom}	
\affiliation{Perimeter Institute for Theoretical Physics, Waterloo, Ontario N2L 2Y5, Canada}	
		

		\begin{abstract}
  Unextendible product bases (UPBs) play a key role in the study of quantum entanglement and nonlocality.    A famous open question  is whether there exist  genuinely unextendible product bases (GUPBs), namely  multipartite product bases that are unextendible with respect to every possible bipartition.  Here we shed light on this question by providing a characterization of UPBs and GUPBs in terms of orthogonality graphs.  Building on this connection, we develop a method for constructing UPBs in low dimensions, and we derive a lower bound on the size of any GUPB, significantly improving over the state of the art. Moreover, we  show that every minimal GUPB saturating our bound must be associated to regular  graphs. Finally,  we discuss  a possible path towards the construction of a minimal GUPB  in a tripartite system of minimal local dimension.		\end{abstract}
	\maketitle

		\section{Introduction}\label{sec:int}

An important notion in the study of quantum entanglement and nonlocality is the notion of   unextendible product basis (UPB) \cite{bennett1999unextendible}.    Mathematically, a UPB is a set of orthogonal product vectors whose complementary subspace contains no product vector \cite{bennett1999unextendible}.  UPBs have   a number  of properties that make them  important in   quantum information and quantum foundations. For example, the complementary subspace of a UPB is a completely entangled subspace, that is, a subspace containing only  entangled states  \cite{parthasarathy2004maximal,bhat2006completely,walgate2008generic}. The normalized projector on the complementary subspace of a UPB is a bound entangled state, that is, a state from which  no pure entanglement can be distilled \cite{bennett1999unextendible,divincenzo2003unextendible}. Another important property is that the states in a UPB  cannot be perfectly distinguished using local operations and classical communication,  a phenomenon that has become known as {\em quantum nonlocality without entanglement} \cite{bennett1999unextendible,bennett1999quantum} and has been recently shown to admit a device-independent certification  \cite{vsupic2022self}.  UPBs also play a central role in the study of Bell inequalities with no quantum violation \cite{augusiak2011bell, Augusiak2012tight,fritz2013local,acin2016guess}, where they offer  insights into the foundations of quantum theory.  
    
The construction and characterization of UPBs  has attracted great  attention over the past two decades \cite{divincenzo2003unextendible,AL01,Fen06,Joh13,Johnston2014The,Chen2013The,chen2019unextendible,chen2018no,chen2018multiqubit,halder2019family,shi2020unextendible,shi2021strong,shi2022strongly}. A famous open question  in the field is whether there exists a multipartite UPB that is a UPB with respect to  every possible bipartition.  Such a UPB is called a {\em genuinely unextendible product basis (GUPB)} \cite{demianowicz2018unextendible} and its complementary subspace  is a genuinely entangled subspace, that is, a subspace that contains only genuinely entangled states \cite{parthasarathy2004maximal,cubitt2008dimension,demianowicz2018unextendible}. 

Sets of orthogonal  product states  that cannot be completed to full product bases in every bipartion were found in Ref.~\cite{shi2022unextendible}. However, these sets do not provide examples of GUPBs, because non-completability to a full product basis for the whole Hilbert space  is a weaker property  than non-extendibility to a larger set of orthogonal product states.   A universal construction for genuinely entangled subspaces  was given in \cite{demianowicz2022universal}.    However, determining whether the orthogonal complement of such subspaces admits a product basis, and in the affirmative case, constructing the product  basis is highly non-trivial. For these reasons, the  existence of GUPBs is still an open question.

Recently, Demianowicz 
gave a lower bound on size that GUPBs must have, if they exist \cite{PhysRevA.106.012442}: for an $N$-partite  GUPB  in $\bbC^{d_1}\otimes \bbC^{d_2}\otimes \cdots \otimes \bbC^{d_N}$ the number of vectors in the basis, denoted by $k$, must satisfy the bound  
\begin{align}\label{old} 
k\geq  \frac{D}{d_{\text{max}}}+\fl{\frac{D}{d_{\text{max}}}-2}{N-1}+1 \,,
\end{align}
where $D:=d_1d_2\cdots d_N$, and $d_{\text{max}}:=\text{max}\{d_1,d_2,\ldots, d_N\}$  (here and in the rest of the paper, we always assume the condition  $d_m  \ge 3$ for every $m\in \{1,\dots, N\}$ because no bipartite UPB---and therefore no GUPB---exists when one of  the local dimensions  is smaller  than 3  \cite{bennett1999unextendible,divincenzo2003unextendible}).

In this paper, we provide a characterization of UPBs and GUPBs  in terms of   orthogonality graphs \cite{lovasz1989orthogonal}, a central notion in classical and quantum information theory \cite{lovasz1979shannon,duan2012zero,chiribella2013confusability,ramanathan2014necessary,cabello2014graph,duan2015no,duan2016zero,wang2018separation,wang2017semidefinite}.    Our characterization translates directly into  a constructive method for building UPBs, which we illustrate by building a new UPB for a two-qubits and two-qutrits quantum system.  For GUPBs, the characterization implies a new lower bound, which significantly improves over the state of the art. Specifically, we show that the size of a GUPB   in $\bbC^{d_1}\otimes \bbC^{d_2}\otimes \cdots \otimes \bbC^{d_N}$ is lower bounded as
\begin{align}\label{new}
k\geq  \frac{\sum_{m=1}^N\frac{D}{d_m}-1}{N-1} \,.
\end{align}
In general, the estimate of $k$ provided by Eq.~\eqref{new} is always larger than or equal to the estimate of $k$ provided by Eq.~\eqref{old}. The difference between the two bounds becomes visible when the component  systems  have different local dimensions. For example,  consider a tripartite system where the component systems have local dimensions  $d_1= d_2 =2p$ and $d_3  =  3p$ for some integer $p$. In this case, bounds (\ref{old}) and (\ref{new}) read $k\ge 6p^2$ and $k\ge 8 p^2$, respectively, and the difference  between them becomes arbitrarily large as $p$ increases.

The connection between UPBs/GUPBs and orthogonality graphs  also implies other  constraints on the structure of UPBs/GUPBs.   In particular,  we show that minimal UPBs saturating a bound by Bennett {\em et al.} \cite{bennett1999unextendible} must necessarily correspond to regular graphs, and we show that the same holds for   minimal GUPBs saturating our bound (\ref{new}).     Finally, we use the regularity condition to discuss a possible  path  to the construction of  a minimal GUPB  in a tripartite quantum system of minimal local dimension.




The rest of this paper is organized as follows. In Sec.~\ref{sec:pri}, we review the concepts of UPBs, GUPBs, and orthogonality graphs. In Sec.~\ref{sec:upbs}, we establish a connection between UPBs and orthogonality graphs,  and derive upper and lower bounds on the degrees of vertices of the orthogonality graphs associated to UPBs.  In Sec.~\ref{sec:bounds}, we derive Eq.~\eqref{new} and discuss its relations with other bounds on the size of GUPBs. In Sec.~\ref{sec:improved_bound}, we provide an improved bound valid for certain local dimensions.
In Sec.~\ref{sec:gupbs}, we show that  minimal GUPBs saturating the bound (\ref{new}) should be associated to regular graphs, and we use this result to discuss a possible route to construct a minimal GUPB. Finally, the conclusions are provided  in Sec.~\ref{sec:con}.


		\section{Preliminaries}\label{sec:pri}

In this section, we review a few basic facts about notation, unextendible product bases, orthogonality graphs, and orthogonal representations. 
\medskip 

{\em Notation.}  In this paper, the number of vectors in a UPB  will always be denoted by $k$ and will be called the {\em size} of the UPB.  The total dimension of the space $\bbC^{d_1}\otimes \bbC^{d_2}\otimes\cdots\otimes \bbC^{d_N}$ 
will always be denoted  by $D:=d_1d_2\cdots d_N$.    Moreover, we will  assume that the local dimensions are listed in non-decreasing order, namely $d_1\leq d_2\leq \cdots\leq d_N$.   Finally,  we will often work with unnormalized product states, which simplifies some of the expressions.

\medskip

{\em Unextendible product bases.}     Let us start from the mathematical definition:  

\begin{definition}
A set of orthogonal product states   $\cU=\{\ket{\varphi_1^{(i)}}_{A_1} \ket{\varphi_2^{(i)}}_{A_2} \cdots \ket{\varphi_N^{(i)}}_{A_N}  \}_{i=1}^k \subset \bbC^{d_1}\otimes \bbC^{d_2}\otimes\cdots\otimes \bbC^{d_N}$   is an unextendible product basis (UPB) if the orthogonal complement of ${\sf Span}\{\ket{\varphi_1^{(i)}}_{A_1} \ket{\varphi_2^{(i)}}_{A_2} \cdots \ket{\varphi_N^{(i)}}_{A_N}  \}_{i=1}^k$    has non-zero dimension and contains no product state. A UPB is called  a genuinely unextendible product basis (GUPB) if it is  a UPB with respect to every possible bipartition of the tensor product $\bbC^{d_1}\otimes \bbC^{d_2}\otimes\cdots\otimes \bbC^{d_N}$.
\end{definition}	

A well-known result about bipartite UPBs is that they can only exist if the local dimensions are strictly larger than 2 \cite{bennett1999unextendible,divincenzo2003unextendible}:  in other words, there is no UPB for bipartite systems of the form  $\bbC^2\otimes \bbC^n$ or of the form $\bbC^n\otimes \bbC^2$, for some  $n\geq 2$. 

In the multipartite case, it is important to stress that the notion of GUPB is  much stronger than the notion of multipartite UPB. A multipartite UPB cannot be extended by any vector of the fully-product form $\ket{\psi_1}_{A_1} \ket{\psi_2}_{A_2} \cdots \ket{\psi_N}_{A_N}$, where $\ket{\psi_m}_{A_m}$ is a state of subsystem $A_m$.    In contrast,  a GUPB cannot even be extended by vectors of the form $\ket{\Psi_1}_{S_1}  \otimes  \ket{\Psi_2}_{S_2}$ where $\ket{\Psi_1}$ and $\ket{\Psi_2}$ are (possibly entangled) states of the quantum systems associated to a partition of the composite system $A_1 \cdots A_N$ into two disjoint parts  $S_1$ and $S_2$.  

The fact that no bipartite UPB can exist with local dimensions smaller than 3 implies that an $N$-partite GUPB can only exist if  
\begin{align}
d_m  \ge  3,  \qquad \forall m\in\{1,
\dots,  N\}.
\end{align}

Another important  type of constraint on  multipartite UPBs and GUPBs concerns their size.    A   first  bound was provided by Bennett and coauthors~\cite{bennett1999unextendible}, who showed  that the size of a multipartite UPB is lower bounded as  
\begin{align}\label{UPBbound} 
k\geq \sum_{m=1}^N(d_m-1)+1 \,.
\end{align}
Later, Alon and Lov\'asz \cite{AL01} showed that the above inequality holds with the $``>"$ sign if     at least one of the dimensions $(d_m)_{m=1}^N$ is  even  and the sum  $\sum_{m=1}^N(d_m-1)+1$ is odd.
  Applying the above bounds to the bipartition $(A_1  |  A_2  \cdots A_N)$  yields the following bounds on the size of  GUPBs \cite{PhysRevA.106.012442} 
  \begin{equation}\label{trivial}
  \begin{aligned}
k  \ge 
\left\{ 
\begin{array}{ll} 
 d_1+\frac{D}{d_1}, \qquad & {\rm if}~d_1~{\rm and}~\frac D{d_1}~{\rm are~even};~ \\
 &\\
 d_1+\frac{D}{d_1}-1, & {\rm otherwise}.
 \end{array} 
 \right.
 \end{aligned}
 \end{equation}
In the rest of the paper, we will call a  lower bound {\em non-trivial} if it improves over Eq. (\ref{trivial})  for some values of $N$ and of the local dimensions.    An example of a non-trivial lower bound is  Demianowicz's bound (\ref{old}) in the case when  $(N-1)  d_N  < N \,  d_1$  and when  certain conditions on the local dimensions are satisfied \cite{PhysRevA.106.012442}.   Another example of a non-trivial lower bound is our bound (\ref{new}), which is non-trivial for a larger set of values of the local dimensions. 

\begin{figure*}[t]
		\centering
		\includegraphics[scale=0.6]{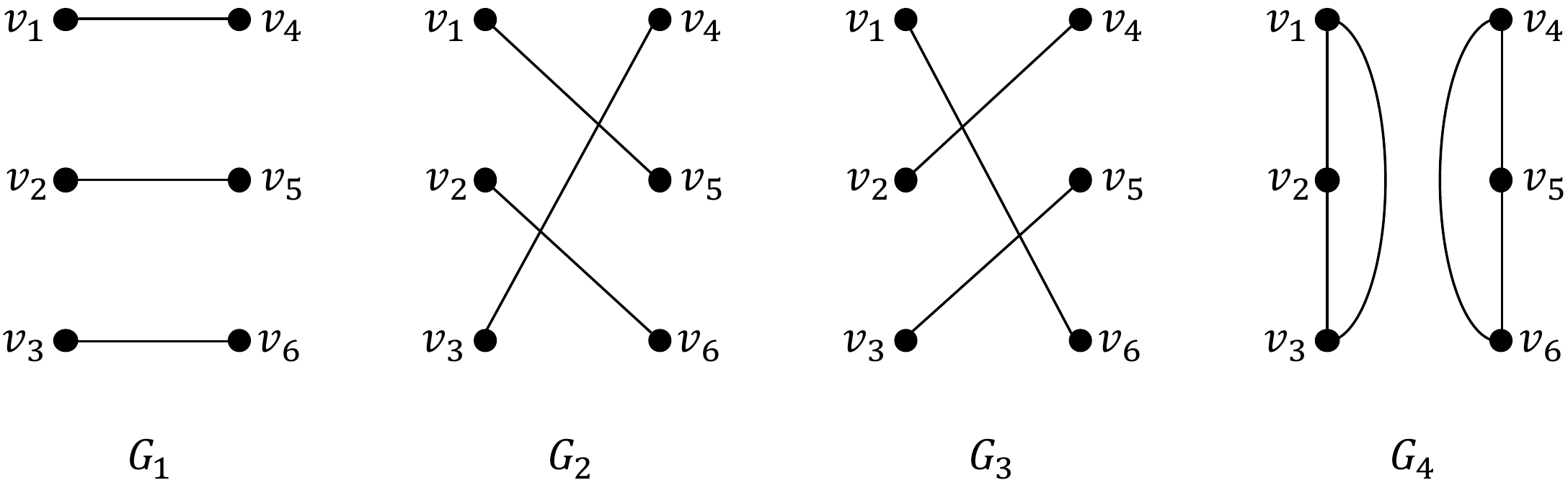}
		\caption{Orthogonality graphs of the UPB  in Example \ref{example:upb2223}. } \label{fig:orthogonality_graph}
	\end{figure*}

\medskip

\emph{Orthogonality graphs}. An undirected simple graph $G=(V,E)$ is an ordered pair consisting of a set $V$ of vertices, and a set $E$ of edges, which is an irreflexive, symmetric relation on $V$. A vertex $u$ is a neighbor of a vertex $v$ if $u$ and  $v$ are adjacent, namely $(u,v)\in E$.    The neighborhood $N_G(v)$ of a vertex $v$ is the set of all neighbors of $v$. The degree ${\rm deg}_G(v)$ is the  number of vertices in  the neighborhood $N_G(v)$, i.e. ${\rm deg}_G(v)=|N_G(v)|$.  If the degree of each vertex is $k$, the graph is called  $k$-regular.   A complete graph $K_n$ is an $(n-1)$-regular  graph with $n$ vertices, that is, a graph in which every two different vertices are connected. For two graphs $G_1=(V_1,E_1)$ and $G_2=(V_2,E_2)$, the union of graphs $G_1$ and $G_2$ is the graph $G_1\cup G_2=(V_1\cup V_2,E_1\cup E_2)$.

The {\em orthogonality graph} of a set of vectors  $\{  |\varphi^{(1)} \rangle, \dots, |\varphi^{(k)}\rangle\} \subset \bbC^d$, is the graph $G  =  (V,  E)$ with  vertex set $V=\{v_1, \cdots,v_k\}$ and edge set  
$E=\{(v_i, v_j)\mid \braket{\varphi^{(i)}}{\varphi^{(j)}}=0\}$.  

A connection between UPBs and orthogonality graphs was made by Alon and Lov\'asz in Ref. \cite{AL01}, where it was used to prove existence results about minimal UPBs satisfying Bennett {\em el al.'s}  bound (\ref{UPBbound}). We now introduce a new definition that will allow us to provide an if and only if characterization of UPBs in terms of orthogonality graphs. 

\begin{definition}
Let $G = (V,E)$ be the  orthogonality graph of the set $\{  |\varphi^{(1)} \rangle, \dots, |\varphi^{(k)}\rangle\} \subset \bbC^d$, and let $W  \subseteq V$ be a subset of the vertices.    We say that the subset  $W $ is {\em saturated} if the corresponding vectors  $\{  |\varphi^{(i)}  \rangle\mid   v_i  \in  W\}$ span the whole space $\bbC^d$.   Otherwise, we call the set $W$ {\em unsaturated.}     
\end{definition}

For a set of $N$-partite product vectors 
$\{\ket{\varphi_1^{(i)}}_{A_1} \ket{\varphi_2^{(i)}}_{A_2} \cdots \ket{\varphi_N^{(i)}}_{A_N}\}_{i=1}^k$ in $\bbC^{d_1}\otimes \bbC^{d_2}\otimes\cdots\otimes \bbC^{d_N}$, one can  define $N$ orthogonality graphs: 
\begin{definition}   Let $\{\ket{\varphi_1^{(i)}}_{A_1} \ket{\varphi_2^{(i)}}_{A_2} \cdots \ket{\varphi_N^{(i)}}_{A_N}\}_{i=1}^k$ in $\bbC^{d_1}\otimes \bbC^{d_2}\otimes\cdots\otimes \bbC^{d_N}$ be a set of $N$-partite product vectors.  For $m  \in  \{1,\dots, N\}$, the  orthogonality graph $G_m=(V,E_m)$  is the graph with  vertex set $V=\{v_1,v_2,\cdots,v_k\}$ and edge set  $E_m=\{(v_i, v_j)\mid \braket{\varphi_m^{(i)}}{\varphi_m^{(j)}}_{A_m}=0\}$. 
\end{definition}

Note that all the graphs $G_m$ have the same vertex set, and (generally) different edges due to the (generally) different orthogonality relations between the  vectors in different subsystems.

We now give a necessary and sufficient condition, formulated in terms of orthogonality graphs, for a set of product states to be a  UPB.

\begin{lemma}\label{lemma:suf_nec}
Let   $\cU$ be a set of $k$ product vectors in $\bbC^{d_1}\otimes \bbC^{d_2}\otimes\cdots\otimes \bbC^{d_N}$, and let $(G_m)_{m=1}^N$ be the corresponding orthogonality graphs.   The set $\cU$ is a UPB if and only if the following conditions hold:
\begin{enumerate}[(i)]
    \item $\bigcup_{m=1}^NG_m=K_k$;
    \item  $\bigcup_{m=1}^NW_m\neq V$ for every $N$-tuple  $(W_1 ,W_2,\ldots,W_N) $ in which  $W_m$ is an unsaturated set for  $G_m$ for  every $m\in  \{1,
    \dots, N\}$.  \end{enumerate}
\end{lemma}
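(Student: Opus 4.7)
The plan is to decompose the two defining features of a UPB---pairwise orthogonality of the product vectors and the absence of any nonzero product vector in the orthogonal complement of ${\sf Span}(\cU)$---and to show that condition (i) captures the first while condition (ii) captures the second, after which the lemma follows by combining the two equivalences.

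\emph{Condition (i) encodes pairwise orthogonality.} Two product vectors $\ket{\varphi_1^{(i)}}\cdots\ket{\varphi_N^{(i)}}$ and $\ket{\varphi_1^{(j)}}\cdots\ket{\varphi_N^{(j)}}$ are orthogonal if and only if $\prod_{m=1}^N \braket{\varphi_m^{(i)}}{\varphi_m^{(j)}}=0$, if and only if there is at least one $m$ with $(v_i,v_j)\in E_m$, if and only if $(v_i,v_j)$ is an edge of $\bigcup_{m=1}^N G_m$. Hence $\cU$ consists of mutually orthogonal vectors exactly when $\bigcup_{m=1}^N G_m=K_k$.

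\emph{Condition (ii) encodes absence of product vectors in the complement.} The plan is to set up an explicit correspondence between nonzero product vectors $\ket{\psi_1}\cdots\ket{\psi_N}\in{\sf Span}(\cU)^\perp$ and $N$-tuples of unsaturated sets $(W_1,\ldots,W_N)$ satisfying $\bigcup_m W_m=V$. Starting from such a product vector, define $W_m:=\{v_i\in V\mid\braket{\psi_m}{\varphi_m^{(i)}}=0\}$: since $\ket{\psi_m}\neq 0$ is orthogonal to every $\ket{\varphi_m^{(i)}}$ with $v_i\in W_m$, those vectors cannot span $\bbC^{d_m}$, so $W_m$ is unsaturated; and since orthogonality of $\ket{\psi_1}\cdots\ket{\psi_N}$ to $\ket{\varphi_1^{(i)}}\cdots\ket{\varphi_N^{(i)}}$ forces at least one factor $\braket{\psi_m}{\varphi_m^{(i)}}$ to vanish, the sets $(W_m)$ cover $V$. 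Conversely, starting from unsaturated $(W_m)_{m=1}^N$ with $\bigcup_m W_m=V$, for each $m$ pick a nonzero $\ket{\psi_m}\in\bbC^{d_m}$ orthogonal to every $\ket{\varphi_m^{(i)}}$ with $v_i\in W_m$ (such a $\ket{\psi_m}$ exists precisely because $W_m$ is unsaturated), and verify that $\ket{\psi_1}\cdots\ket{\psi_N}$ is a nonzero product vector in ${\sf Span}(\cU)^\perp$, using the covering condition to produce, for each $i$, some $m$ with $v_i\in W_m$ and hence $\braket{\psi_m}{\varphi_m^{(i)}}=0$.

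The only delicate ingredient underpinning the second equivalence is the standard linear-algebraic duality that a finite set of vectors in $\bbC^{d_m}$ spans the whole space if and only if no nonzero vector is orthogonal to all of them; once this is invoked, the rest of the argument is a transparent translation between the pointwise orthogonality of $N$-partite product vectors and the combinatorial conditions (i) and (ii) on the family of orthogonality graphs $(G_m)_{m=1}^N$. Combining the two equivalences yields the lemma: $\cU$ is a UPB if and only if both (i) and (ii) hold.
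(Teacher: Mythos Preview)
Your proposal is correct and follows essentially the same approach as the paper's proof: both split the argument into (i) $\Leftrightarrow$ pairwise orthogonality and (ii) $\Leftrightarrow$ unextendibility, and both establish the second equivalence via the same contrapositive constructions---defining $W_m:=\{v_i:\braket{\psi_m}{\varphi_m^{(i)}}=0\}$ from an extending product vector, and conversely choosing $\ket{\psi_m}$ in the orthogonal complement of $\{\ket{\varphi_m^{(i)}}:v_i\in W_m\}$ when the $W_m$ are unsaturated and cover $V$.
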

 
 

The proof of Lemma~\ref{lemma:suf_nec} is provided in Appendix~\ref{appendix:lemma}. To illustrate the lemma, we consider the following example:  

\begin{example}\label{example:upb2223}
The following product vectors   form a UPB in  $\bbC^2\otimes \bbC^2\otimes \bbC^2\otimes \bbC^3$:
\begin{small}
\begin{equation}\label{eq:upb2223}
    \begin{aligned}
    \ket{\psi_1}=&\ket{0}_{A_1} \ket{0}_{A_2} \ket{0}_{A_3} \ket{0}_{A_4},\\  
    \ket{\psi_2}=&(\ket{0}+\ket{1})_{A_1} (\ket{0}+\ket{1})_{A_2} (\ket{0}+\ket{1})_{A_3}\ket{1}_{A_4},  \\
    \ket{\psi_3}=&(\ket{0}+2\ket{1})_{A_1} (\ket{0}+2\ket{1})_{A_2} (\ket{0}+2\ket{1})_{A_3}\ket{2}_{A_4}, \\  
    \ket{\psi_4}=&\ket{1}_{A_1}(2\ket{0}-\ket{1})_{A_2}(\ket{0}-\ket{1})_{A_3}(\ket{0}+\ket{1}+\ket{2})_{A_4},\\
    \ket{\psi_5}=&(\ket{0}-\ket{1})_{A_1}\ket{1}_{A_2}(2\ket{0}-\ket{1})_{A_3}(\ket{0}+2\ket{1}-3\ket{2})_{A_4},\\ 
    \ket{\psi_6}=&(2\ket{0}-\ket{1})_{A_1}(\ket{0}-\ket{1})_{A_2}\ket{1}_{A_3}(5\ket{0}-4\ket{1}-\ket{2})_{A_4}.
    \end{aligned}
\end{equation}
\end{small}
The UPB $\{\ket{\psi_i}\}_{i=1}^6$ has  the minimum size compatible with Bennett {\em et al.}'s bound  (\ref{UPBbound}), which in this case reads $k  \ge \sum_{m=1}^4  (d_m -1)  +  1=  6$.  
\end{example}

\textbf{Analysis of Example \ref{example:upb2223}}. For the vectors in Example \ref{example:upb2223},  the  orthogonality graphs $(G_m)_{m=1}^4$ and their common vertex set are shown in Fig.~\ref{fig:orthogonality_graph}.   It is then easy to check that the union of the graphs $(G_m)_{m=1}^4$ is the complete graph $K_6$.  Hence, the first condition in Lemma~\ref{lemma:suf_nec} is satisfied. Regarding the second condition, note that every two vectors in  the set $\{\ket{0}, (\ket{0}+\ket{1}), (\ket{0}+2\ket{1}), \ket{1}, (\ket{0}-\ket{1}), (2\ket{0}-\ket{1})\}\subset \bbC^2$ are linearly independent, and therefore form a basis for $\bbC^2$.  Hence,  the size of any unsaturated set $W_m$ in $G_m$ can be at most 1 for every $m\in [1,3]$. Similarly, since any three vectors in the set  $\{\ket{0}, \ket{1}, \ket{2}, (\ket{0}+\ket{1}+\ket{2})_, (\ket{0}+2\ket{1}-3\ket{2}), (5\ket{0}-4\ket{1}-\ket{2})\}\subset \bbC^3$ are linearly independent,  the size of any unsaturated set $W_4$ in $G_4$ is at most 2. Putting everything together, we obtain that the union of  any four  unsaturated sets $(W_m)_{m=1}^4$ cannot contain all vertices in $V$. Since both conditions in Lemma~\ref{lemma:suf_nec} are satisfied,  we conclude that the vectors  $\{\ket{\psi_i}\}_{i=1}^6$ form a UPB. 

\medskip

Lemma~\ref{lemma:suf_nec}  also implies  an upper bound on the number of elements in the unsaturated sets  associated to a UPB:  
\begin{lemma}\label{lem:unsaturated}
 Let $(G_m)_{m=1}^N$ be the orthogonality graphs associated to a UPB of size $k$.  Then, the size of any unsaturated set $W_m$ in $G_m$   is upper bounded as 
 \begin{align}
 |W_m|  \le  k   -1-   \sum_{i\in  \{1,\dots,  N\}  \setminus \{m\} }  \,  (d_i-1) \,.
 \end{align}
 \end{lemma}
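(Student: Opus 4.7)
The plan is to argue by contrapositive against condition (ii) of Lemma~\ref{lemma:suf_nec}. Given an unsaturated set $W_m$ in $G_m$, I will construct a companion family of unsaturated sets $W_i$ in $G_i$ for every $i\neq m$ whose joint union with $W_m$ covers all of $V$; since the UPB property forbids this, the size of $W_m$ must be small.

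The crucial observation underlying the construction is that any subset of vertices of size at most $d_i-1$ is automatically unsaturated in $G_i$: fewer than $d_i$ vectors cannot span $\mathbb{C}^{d_i}$. This gives complete freedom in choosing how to distribute ``coverage'' among the $N-1$ parties different from $m$, subject only to a budget of $d_i-1$ vertices per party.

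With this in hand, I would suppose for contradiction that $|W_m| > k-1-\sum_{i\neq m}(d_i-1)$, which rewrites as
\begin{equation*}
|V\setminus W_m| \;=\; k-|W_m| \;\leq\; \sum_{i\neq m}(d_i-1).
\end{equation*}
This budget inequality is exactly what allows me to partition $V\setminus W_m$ into disjoint subsets $(W_i)_{i\neq m}$ with $|W_i|\leq d_i-1$ for every $i\neq m$ (empty parts are allowed). By the key observation, each such $W_i$ is unsaturated in $G_i$, while by construction
\begin{equation*}
\bigcup_{i=1}^{N} W_i \;=\; W_m \cup (V\setminus W_m) \;=\; V,
\end{equation*}
contradicting condition (ii) of Lemma~\ref{lemma:suf_nec}. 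The stated bound on $|W_m|$ follows.

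I do not expect a real obstacle here: the only non-routine ingredient is the elementary ``small implies unsaturated'' remark, after which the argument reduces to a one-line counting/partition calculation and a direct appeal to Lemma~\ref{lemma:suf_nec}(ii). Note also that the proof uses only condition (ii), not the spanning condition (i), and it makes no use of $d_m$ itself, which is consistent with the fact that the right-hand side of the bound does not involve $d_m$.
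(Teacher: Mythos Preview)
Your proposal is correct and follows essentially the same route as the paper: assume the bound fails, observe that $|V\setminus W_m|\le\sum_{i\ne m}(d_i-1)$, distribute $V\setminus W_m$ into parts of size at most $d_i-1$ (hence unsaturated in $G_i$), and derive a contradiction with condition~(ii) of Lemma~\ref{lemma:suf_nec}. Your write-up is in fact slightly more explicit than the paper's in flagging the ``fewer than $d_i$ vectors cannot span $\mathbb{C}^{d_i}$'' step and in noting that empty parts are permitted.
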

The proof is provided in Appendix \ref{appendix:lemma}.
\medskip 

{\em Orthogonal representations of a graph.} 
An \emph{orthogonal representation}  \cite{lovasz1989orthogonal}  of a graph $G=  (V,E)$  in dimension $d$ is a set of $k=|V|$  vectors $\{  |\varphi^{(1)} \rangle, \dots, |\varphi^{(k)}\rangle\} \subset \bbC^d$ such that  
 $\braket{\varphi^{(i)}}{\varphi^{(j)}}  = 0$  
for every pair of adjacent vertices $v_i$ and $v_j$.    The representation is called {\em faithful}  if $\braket{\varphi^{(i)}}{\varphi^{(j)}}  = 0$ only if $v_i$ and $v_j$ are adjacent.  

One way to search for an orthogonal representation of a given graph is to solve  the following optimization problem:
\begin{align}
    \text{minimize} &~ \sum_{(v_i,v_j)\in E} |\braket{\varphi^{(i)}}{\varphi^{(j)}}| \nonumber \\
    \text{subject to} & ~\braket{\psi}{\varphi^{(i)}} = 1,~\qquad \forall i=1,...,k \label{eq:min_st}
\end{align}
where $\ket{\varphi^{(i)}}$ are variable vectors and $\ket{\psi}$ is an arbitrarily chosen non-zero constant vector. Here, the constraint~\eqref{eq:min_st} ensures that every $\ket{\varphi^{(i)}}$ is non-zero. This constraint does not restrict the search space of orthogonal representations, since for any given valid orthogonal representation, one can always rotate the vectors globally so that each vector has a non-zero overlap with $\ket{\psi}$, and scale each vector individually to make every overlap be one. In our realization of this algorithm, we fix the first component of each vector $\ket{\varphi^{(i)}}$ to be one, which is equivalent to setting $\ket{\psi}$ to be the unit vector along the first axis. After optimization, if the objective function reaches zero, then the vectors $\{\ket{\varphi^{(i)}}\}_{i=1}^k$  satisfy the desired orthogonality relations.  

Eq.~\eqref{eq:min_st} shows that the search for orthogonal representations of a graph is an optimization problem with linear constraints. Various algorithms for this task are known such as Sequential Least Squares Programming \cite{han1976superlinearly,kraft1988software}.   
Notice that the problem in Eq.~\eqref{eq:min_st} is not a convex optimization, and therefore  optimization algorithms are not guaranteed to find the global minimum. Still, when an algorithm returns the value 0,  this value is automatically guaranteed to be the global minimum, and the result of the optimization is an orthogonal representation of the given graph.    In general, the solution may  not be faithful, meaning there may exist vectors $\ket{\varphi^{(i)}}$ and $\ket{\varphi^{(j)}}$ that are orthogonal even if the corresponding vertices  $v_i$ and $v_j$ are not adjacent.

Lemma~\ref{lemma:suf_nec}  suggests a systematic route to construct UPBs of any desired size $k$ in $\bbC^{d_1}\otimes \bbC^{d_2}\otimes\cdots\otimes \bbC^{d_N}$: 
\begin{enumerate}
\item decompose the complete graph $K_k$ into $N$ subgraphs $(G_m)_{m=1}^N$; 
\item  for each $G_m$,  find an orthogonal representation in $\bbC^{d_m}$; 
\item  for  each arbitrary  $N$-tuple  $(W_m)_{m=1}^N$ of unsaturated sets $W_m$  in $G_m =  (V,E_m)$, check that $\bigcup_{m=1}^NW_m\neq V$. 
\end{enumerate} 

The first step, namely the decomposition of the complete graph into subgraphs $(G_m)_{m=1}^N$ will be discussed in the next section of the paper.  Once a decomposition is given, the second step can be attempted by optimization algorithms that search for an orthogonal representation of the graphs $G_m$, as discussed in the previous paragraph.  Computationally, this step is the most challenging one.      
Finally, the third step can be achieved by brute-force enumerating all the unsaturated sets of the graphs $G_m$, once the decomposition $(G_m)_{m=1}^N$ and an orthogonal representation of the graphs $G_m$  are known.   The computational cost of this step is tolerable for instances of the problem where  $N$ and $D$ are small.  In general, the size of every unsaturated set  in $G_m$ is  upper bounded by $k-1 -  \sum_{i \in  \{1,\dots,  N\}\setminus\{m\}}(d_i-1)$ (by Lemma \ref{lem:unsaturated}), and further inspection of the structure of the orthogonality graphs $G_m$ and their orthogonal representations can further reduce this number.
Hence, as long as the number of systems $N$ and the total dimension $D$ are small, the enumeration of all $N$-tuples of unsaturated sets remains computationally feasible.   As an example of the application of our method, we constructed a new UPB of size $8$  for a two-qubits and two-qutrits system.   The basis and its orthogonality graphs are shown in  Example~\ref{example:upb2233} in Appendix~\ref{appendix:lemma}.   This UPB has  the minimum size compatible with Alon and Lov\'asz's bound  (\ref{UPBbound}), which in this case reads $ \sum_{m=1}^4  (d_m -1)  +  1=  7$ is odd, and at least one of the local dimensions $(2,2,3,3)$ is even.



Later in the paper, we will further discuss the minimal case $N=3$, with $d_1 =  d_2  = d_3  =  3$.  In this case,  $k$ can be generally bounded as $7\le k \le  23$ for UPBs and $ 13 \le k  \le   23$ for GUPBs (the lower bounds come from Eq. (\ref{UPBbound}) for UPBs and from Eqs. (\ref{old}) and (\ref{new}) for GUPBs, while the upper bound comes from the fact that the projector on the span of a UPB is the orthogonal complement of  a bound entangled state with positive partial transpose, and no such state  can have a rank smaller than 4  \cite{chen2013separability}).

  		\begin{table*}[t]
	\renewcommand\arraystretch{1.7}	
	\caption{ Comparison among three lower bounds on the GUPB size for different values of the local dimensions. 
 }\label{Table:compasions}
	\centering
	\renewcommand\tabcolsep{5pt}
	\begin{tabular}{c|c|c|c}
		\midrule[1.1pt]
		Local dimensions  &  Bound (\ref{old})  ~\cite{PhysRevA.106.012442}  	& Bound (\ref{trivial})~\cite{bennett1999unextendible,AL01,PhysRevA.106.012442}  & Our bound (\ref{new})      \\
		\midrule[1.1pt]
  	   (3,3,4)           &13    &14    &16            \\
	   (3,3,5)           &13    &17    &19            \\
	   (3,3,3,4)         &36    &38    &45            \\
	   (3,3,4,4)         &48    &50    &56           \\
	   (3,3,3,3,4)       &101   &110   &128           \\
	   (3,3,3,4,4)       &135   &146   &162          \\
		\midrule[1.1pt]
	\end{tabular}
\end{table*}

\section{Orthogonality graphs of UPBs}\label{sec:upbs}
In this section,  we show that the orthogonality graphs associated to UPBs must satisfy non-trivial  conditions on the degree of their  vertices.   In particular, we show that every minimal UPB saturating Bennett {\em et al.}'s bound (\ref{UPBbound}) must correspond to regular graphs.

\begin{lemma}\label{lem:deg}
For every  UPB in   $\bbC^{d_1}\otimes \bbC^{d_2}\otimes\cdots\otimes \bbC^{d_N}$, the degrees of the vertices in the  orthogonality graphs $(G_m)_{m=1}^N$ must satisfy the condition 
\begin{equation}\label{degreebounds}
\begin{aligned}
  &d_m-1\leq {\rm deg}_{G_m}(v_i)\leq k-1-\sum_{i \in  \{1,\dots,  N\} \setminus \{m\}}   (d_i  -  1), \\
  &\forall v_i\in  V \, ,\quad \forall  m\in  \{1,\dots,  N\} \,.
  \end{aligned}
\end{equation}

\end{lemma}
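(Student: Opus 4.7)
The plan is to prove the upper and lower bounds in Eq.~\eqref{degreebounds} separately. The upper bound will follow almost immediately from Lemma~\ref{lem:unsaturated}, once we produce a natural unsaturated set associated with each vertex. The lower bound will be proved by contradiction: if the degree is too small, we will explicitly build a product vector orthogonal to every element of the UPB, violating unextendibility.

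For the upper bound, fix $v_i\in V$ and $m\in\{1,\dots,N\}$ and set $W_m:=N_{G_m}(v_i)$. By definition of the orthogonality graph $G_m$, every vector $|\varphi_m^{(j)}\rangle$ with $v_j\in W_m$ is orthogonal to $|\varphi_m^{(i)}\rangle$. Hence all such vectors lie in the $(d_m-1)$-dimensional subspace $|\varphi_m^{(i)}\rangle^{\perp}\subset\bbC^{d_m}$ and cannot span $\bbC^{d_m}$, so $W_m$ is unsaturated in $G_m$. Applying Lemma~\ref{lem:unsaturated} to $W_m$ yields
\[
\deg_{G_m}(v_i)=|W_m|\le k-1-\sum_{i\in\{1,\dots,N\}\setminus\{m\}}(d_i-1),
\]
which is the desired upper bound.

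For the lower bound, suppose for contradiction that $\deg_{G_m}(v_i)\le d_m-2$. The span
$S:=\lin\bigl(\{|\varphi_m^{(i)}\rangle\}\cup\{|\varphi_m^{(j)}\rangle:v_j\in N_{G_m}(v_i)\}\bigr)$
has dimension at most $1+\deg_{G_m}(v_i)\le d_m-1<d_m$. Therefore there exists a non-zero vector $|\psi_m\rangle\in\bbC^{d_m}$ orthogonal to every vector in $S$, i.e.\ orthogonal both to $|\varphi_m^{(i)}\rangle$ and to every $|\varphi_m^{(j)}\rangle$ with $v_j\in N_{G_m}(v_i)$. Now define the product vector
\[
|\Psi\rangle:=|\varphi_1^{(i)}\rangle\cdots|\varphi_{m-1}^{(i)}\rangle\,|\psi_m\rangle\,|\varphi_{m+1}^{(i)}\rangle\cdots|\varphi_N^{(i)}\rangle.
\]
I will verify that $|\Psi\rangle$ is orthogonal to every element of $\cU$. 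For $j=i$ this is immediate from $\langle\psi_m|\varphi_m^{(i)}\rangle=0$. For $j\ne i$, condition (i) of Lemma~\ref{lemma:suf_nec} guarantees $(v_i,v_j)\in E_{m'}$ for some $m'$; if $m'=m$ then $\langle\psi_m|\varphi_m^{(j)}\rangle=0$ by construction, and if $m'\ne m$ then $\langle\varphi_{m'}^{(i)}|\varphi_{m'}^{(j)}\rangle=0$ by definition of $E_{m'}$. In either case the relevant factor of the inner product $\langle\Psi|\varphi_1^{(j)}\cdots\varphi_N^{(j)}\rangle$ vanishes. Thus $|\Psi\rangle$ is a non-zero product vector orthogonal to every element of $\cU$, contradicting the fact that $\cU$ is a UPB.

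The main conceptual step is recognizing $N_{G_m}(v_i)$ as an unsaturated set in order to invoke Lemma~\ref{lem:unsaturated}; the main calculational step is the dimension count that produces the extension vector $|\psi_m\rangle$ in the lower-bound argument. Neither step looks like an obstacle, so I anticipate the proof to go through cleanly by combining Lemmas~\ref{lemma:suf_nec} and~\ref{lem:unsaturated} as above.
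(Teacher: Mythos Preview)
Your proof is correct and follows essentially the same approach as the paper: the upper bound is obtained by recognizing the neighborhood $N_{G_m}(v_i)$ as an unsaturated set and invoking Lemma~\ref{lem:unsaturated}, and the lower bound is obtained by the same contradiction argument, constructing the extending product vector with a replacement $|\psi_m\rangle$ in the $m$-th factor. Your write-up is in fact slightly more explicit than the paper's in verifying orthogonality for each $j$ via condition~(i) of Lemma~\ref{lemma:suf_nec}.
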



The proof of Lemma~\ref{lem:deg} is provided in Appendix~\ref{appendix:lemma}.


Our bounds on the degrees of the vertices are satisfied with the equality sign   when the UPB has the  minimal size compatible with Bennett {\em et al.}'s bound (\ref{UPBbound}):   

 
\begin{proposition}\label{pro:re}
For a minimal UPB saturating  Bennett {\em et al.}'s bound (\ref{UPBbound}), the orthogonality graph $G_m$ is a $(d_m-1)$-regular graph for every $m\in  \{1,\dots, N\}$.
\end{proposition}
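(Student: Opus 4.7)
The plan is to derive the proposition directly from Lemma~\ref{lem:deg} by substituting the minimal value of $k$ and observing that the lower and upper bounds on the vertex degrees collapse to the same number.

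First I would recall that Lemma~\ref{lem:deg} asserts, for every UPB and every $m\in\{1,\dots,N\}$, the two-sided bound
\begin{equation*}
d_m-1 \;\le\; \deg_{G_m}(v_i)\;\le\; k-1-\sum_{i\in\{1,\dots,N\}\setminus\{m\}}(d_i-1),
\end{equation*}
valid for all vertices $v_i\in V$. Since the statement only concerns UPBs that saturate Bennett \emph{et al.}'s bound \eqref{UPBbound}, I would substitute $k=\sum_{j=1}^N(d_j-1)+1$ into the upper bound and simplify:
\begin{equation*}
k-1-\sum_{i\ne m}(d_i-1) \;=\; \sum_{j=1}^N(d_j-1)-\sum_{i\ne m}(d_i-1) \;=\; d_m-1.
\end{equation*}

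Thus the upper and lower bounds coincide, forcing $\deg_{G_m}(v_i)=d_m-1$ for every vertex $v_i\in V$. By the definition of regularity, this means $G_m$ is $(d_m-1)$-regular, and since $m$ was arbitrary, the conclusion holds for all $m\in\{1,\dots,N\}$.

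I do not foresee any real obstacle here: the proposition is essentially an immediate corollary of Lemma~\ref{lem:deg} once the value $k=\sum_{m=1}^N(d_m-1)+1$ is plugged in. The only conceptual content is the algebraic observation that the two bounds from Lemma~\ref{lem:deg} become tight simultaneously for every $m$ precisely at the minimal size, which in turn rigidifies the combinatorial structure of the associated orthogonality graphs.
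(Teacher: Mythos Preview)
Your proof is correct and follows exactly the approach the paper intends: the paper presents Proposition~\ref{pro:re} immediately after Lemma~\ref{lem:deg} as a direct consequence of the two-sided degree bound \eqref{degreebounds}, noting that the bounds are ``satisfied with the equality sign'' at the minimal size, which is precisely the substitution and collapse you carry out.
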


An example of this situation is Example~\ref{example:upb2223}. There, $G_i$ is a $1$-regular graph for $1\leq i\leq 3$, and $G_4$ is a $2$-regular graph.


Since regularity is a strong graph-theoretic property,  Proposition \ref{pro:re} establishes strong constraint on every minimal UPB saturating Bennett {\em et al.}'s bound.    In the next section, we build on  the connection with orthogonality graphs to derive the bound (\ref{new})  on the size of candidate GUPBs.  Later in the paper, we will show that the bound (\ref{new}) plays for GUPBs a similar role as Bennett {\em et al.}'s bound for UPBs: as we will show, every minimal GUPB saturating bound (\ref{new}) must be associated to regular orthogonality graphs.   


\section{Bound on the  GUPB size}\label{sec:bounds}

In this section, we derive the bound (\ref{new}) and discuss its relations with other bounds on the size of GUPBs.

\begin{theorem}\label{thm:lower_bound}
Every GUPB    in $\bbC^{d_1}\otimes \bbC^{d_2}\otimes\cdots\otimes \bbC^{d_N}$ must satisfy the bound  (\ref{new}),  or equivalently 
\begin{equation}\label{eq:lower_bound}
    k\geq \fc{\sum_{m=1}^N\frac{D}{d_m}-1}{N-1}  \, . 
\end{equation}
\end{theorem}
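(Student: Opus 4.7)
The plan is to combine the per-bipartition degree bound from Lemma~\ref{lem:deg} with the edge-covering condition (i) of Lemma~\ref{lemma:suf_nec} via a simple double counting of edges. Because a GUPB is, by definition, a bipartite UPB across every cut, I would apply Lemma~\ref{lem:deg} in its $N=2$ form at each bipartition $(A_m\,|\,\overline{A_m})$, where $\overline{A_m}$ collects all parties other than $A_m$. On the $A_m$ side the orthogonality graph is still $G_m$, while on the complementary side it is $\bigcup_{i\neq m}G_i$. The two local dimensions of this bipartite UPB are $d_m$ and $D/d_m$, so the upper-degree half of Lemma~\ref{lem:deg} gives, for every vertex $v\in V$,
\[
\deg_{G_m}(v)\;\le\;k-1-\bigl(D/d_m-1\bigr)\;=\;k-D/d_m.
\]

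Next I would convert this vertex-wise inequality into an edge-count bound. Summing the degree inequality over the $k$ vertices and invoking the handshake identity $\sum_{v}\deg_{G_m}(v)=2|E(G_m)|$ yields $2|E(G_m)|\le k(k-D/d_m)$, and summing over $m\in\{1,\dots,N\}$ gives
\[
2\sum_{m=1}^{N}|E(G_m)|\;\le\;k\Bigl(Nk-\sum_{m=1}^{N}D/d_m\Bigr).
\]

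For the matching lower bound I would invoke condition (i) of Lemma~\ref{lemma:suf_nec}, namely $\bigcup_{m=1}^{N}G_m=K_k$. Since the union of edge sets is always at most their sum, $\sum_{m}|E(G_m)|\ge|E(K_k)|=k(k-1)/2$, i.e.\ $2\sum_{m}|E(G_m)|\ge k(k-1)$. Sandwiching the two inequalities and canceling the common factor $k>0$ yields $k-1\le Nk-\sum_{m}D/d_m$, equivalent to $(N-1)k\ge\sum_{m}D/d_m-1$; the integrality of $k$ then upgrades this immediately to the ceiling form~\eqref{eq:lower_bound}.

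The main conceptual point, and the step I would be most careful about, is recognizing that Lemma~\ref{lem:deg} must be invoked \emph{bipartition-by-bipartition} rather than directly at the $N$-partite level: applied to the full $N$-partite UPB, Lemma~\ref{lem:deg} only gives $\deg_{G_m}(v)\le k-1-\sum_{i\neq m}(d_i-1)$, which under the standing assumption $d_i\ge 3$ is strictly weaker than the bipartite bound $k-D/d_m$ for $N\ge 3$. The GUPB hypothesis is precisely what legitimizes the bipartition-level application, and this is where the improvement over Eq.~\eqref{old} enters. Once this tighter per-bipartition bound is in hand, the remainder of the argument is a routine double counting that needs no case analysis on the $d_m$.
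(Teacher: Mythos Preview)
Your proof is correct and follows essentially the same approach as the paper: you apply Lemma~\ref{lem:deg} in its bipartite form to each cut $(A_m\mid\overline{A_m})$ to obtain $\deg_{G_m}(v)\le k-D/d_m$, then combine this with the covering condition $\bigcup_m G_m=K_k$ from Lemma~\ref{lemma:suf_nec}. The only difference is in the aggregation step: the paper fixes a single vertex $v_i$ and observes directly that $\sum_m\deg_{G_m}(v_i)\ge k-1$, whereas you sum the degree bound over all $k$ vertices first and compare total edge counts; this extra averaging is harmless but unnecessary, since the inequality already holds pointwise at each vertex.
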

\begin{proof}
Let us assume there exists a GUPB $\cU$ in $\bbC^{d_1}\otimes \bbC^{d_2}\otimes\cdots\otimes \bbC^{d_N}$. Since $\cU$ is  a UPB with respect to  the bipartition $A_m\mid  \{A_1A_2\cdots A_N\}\setminus\{A_m\}$ for $1\leq m\leq N$, Lemma~\ref{lem:deg} implies that  the degree of every vertex $v_i$ of $G_m$  satisfies the condition
\begin{equation}\label{eq:di}
    {\rm deg}_{G_{m}}(v_i)\leq k-1-(\frac{D}{d_m}-1)=k-\frac{D}{d_m} \, .
\end{equation}
Now,   Lemma \ref{lemma:suf_nec} tells us that the union of the graphs $(G_m)_{m=1}^N$ is the complete graph $K_k$.   Since  the complete graph $K_k$ is $(k-1)$-regular,  we have the bound
\begin{equation}\label{eq:sum}
     \sum_{m=1}^N{\rm deg}_{G_{m}}(v_i)\geq k-1.
\end{equation}
 Combining  Eqs.~\eqref{eq:di} and \eqref{eq:sum}, we  then obtain the relation 
  \begin{equation}\label{eq:sumd}
     \sum_{m=1}^N(k-\frac{D}{d_m})\geq k-1,
 \end{equation}
which implies the desired bound 
\begin{equation}\label{eq:lower_bound1}
    k\geq \frac{\sum_{m=1}^N\frac{D}{d_m}-1}{N-1}  \, . 
\end{equation}
Since $k$ is (by definition) an integer, the bound also holds with the ceiling sign, as in Eq. (\ref{eq:lower_bound}). 
\end{proof}
\vspace{0.4cm}

 Our lower bound coincides with    Demianowicz's bound (\ref{old}) when the local dimensions are all equal, {\em i.e.} if $d_m  =  d\, ,\forall m \in  \{1,\dots,  N\}$. 
 In general, however, our bound is strictly more accurate, as shown in the following proposition: 
 \begin{proposition}\label{prop:boundcomparison}
The r.h.s. of Eq.~\eqref{eq:lower_bound} is always larger than or equal to the r.h.s. of Eq.~\eqref{old}.
 \end{proposition}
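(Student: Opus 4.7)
The plan is to reduce everything to the single quantity $M := D/d_{\max}$, and then exploit the fact that, under the convention $d_1 \le d_2 \le \cdots \le d_N = d_{\max}$ stated in the Preliminaries, each factor $D/d_m$ in the numerator of the new bound is at least $M$.

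The first step is to write
\begin{equation*}
\sum_{m=1}^{N}\frac{D}{d_m} \;\ge\; N\cdot \frac{D}{d_{\max}} \;=\; NM,
\end{equation*}
which follows term-by-term from $d_m\le d_{\max}$. Dividing by $N-1$ yields
\begin{equation*}
\frac{\sum_{m=1}^{N} D/d_m - 1}{N-1} \;\ge\; \frac{NM-1}{N-1} \;=\; M+\frac{M-1}{N-1}.
\end{equation*}

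The second step handles the floor appearing in Demianowicz's bound (\ref{old}). Writing the Euclidean division $M-2 = q(N-1)+r$ with $0\le r\le N-2$, one has $\lfloor (M-2)/(N-1)\rfloor = q$, and also $M-1 = q(N-1)+(r+1)$, so that
\begin{equation*}
\frac{M-1}{N-1} \;=\; q+\frac{r+1}{N-1},\qquad \frac{r+1}{N-1}\in\Bigl(0,\,1\Bigr].
\end{equation*}
Combining this with the inequality from the first step gives
\begin{equation*}
\frac{\sum_{m=1}^{N} D/d_m - 1}{N-1} \;\ge\; M+q+\frac{r+1}{N-1} \;>\; M+q.
\end{equation*}

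The third and final step is to take ceilings. Since the left-hand side is strictly greater than the integer $M+q$, its ceiling is at least $M+q+1$, which is exactly the right-hand side of (\ref{old}). This proves
\begin{equation*}
\fc{\sum_{m=1}^{N}\frac{D}{d_m}-1}{N-1} \;\ge\; \frac{D}{d_{\max}}+\fl{\frac{D}{d_{\max}}-2}{N-1}+1,
\end{equation*}
as claimed. No step is really hard; the only point requiring care is keeping track of the floor and ceiling in the third step, making sure the strict inequality $\tfrac{r+1}{N-1}>0$ is used to promote the ceiling to $M+q+1$ without losing anything, and that the argument goes through for every admissible $r\in\{0,1,\dots,N-2\}$ (in particular in the borderline case $r=N-2$, where the intermediate estimate is already an integer and no ceiling is needed).
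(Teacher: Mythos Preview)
Your proof is correct and follows essentially the same approach as the paper: both proofs rest on the single inequality $\sum_{m=1}^N D/d_m \ge N\,D/d_{\max}$ and then do elementary floor/ceiling bookkeeping. The paper first converts the ceiling to a floor via the identity $\lceil (S-1)/(N-1)\rceil = \lfloor (S-2)/(N-1)\rfloor + 1$ and then uses floor monotonicity, whereas you keep the ceiling, bound the real quantity first, and use Euclidean division of $M-2$ to compare directly---but this is only a cosmetic difference in the order of operations.
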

 
The proof of Proposition~\ref{prop:boundcomparison} is provided in  Appendix \ref{appendix:proposition}.

\medskip 

 Another benefit of the new bound (\ref{new}) is that it  provides  non-trivial lower bound  in new cases, including  values of  the local dimensions for which no previous bound could improve over Eq.~\eqref{trivial}.     Some examples of this situation are illustrated  in  Table~\ref{Table:compasions}.

  In Ref.~\cite{PhysRevA.106.012442},  Demianowicz  showed that Eq.~\eqref{old} is a non-trivial lower bound  if and only if $(N-1)d_{\text{max}}<Nd_{\text{min}}$, where $d_{\text{min}}=\text{min}\{d_1,d_2,\ldots, d_N\}$, and the local dimensions satisfy the conditions  $(d_1,d_2,d_3)\neq (2p,2p,3p-1)$ and $(d_1,d_2,d_3)\neq (2p-1,\tilde{d},3p-2)$ for every integer  $p\geq 2$ and every  integer $\tilde d$ satisfying $2p-1\leq \tilde{d}\leq 3p-2$.  In contrast, we now show that our lower bound (\ref{new})  remains non-trivial even when the local dimensions are of the form  $(d_1,d_2,d_3)= (2p,2p,3p-1)$, or  $(d_1,d_2,d_3)= (2p-1,\tilde{d},3p-2)$. 
   
\begin{proposition}\label{cor:p}
In the tripartite case, the  bound  (\ref{new})  is  non-trivial when  
 $(d_1,d_2,d_3)= (2p,2p,3p-1)$ for some integer $p\ge 2$, and when $(d_1,d_2,d_3)= (2p-1,\tilde{d},3p-2)$ for some integer $p\geq 2$ and some integer $\tilde d  \in  [2p-1,  3p-2]$. 
\end{proposition}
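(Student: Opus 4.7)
The plan is to prove Proposition~\ref{cor:p} by direct computation: for each of the two families of triples, I substitute into both the trivial bound~\eqref{trivial} and into our bound~\eqref{new}, and then I show that the latter strictly exceeds the former whenever $p\ge 2$. Since $d_1\le d_2\le d_3$ is assumed, the relevant form of~\eqref{trivial} uses $d_1$.

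For the first family $(d_1,d_2,d_3)=(2p,2p,3p-1)$, first I would observe that $d_1=2p$ is even and $D/d_1=2p(3p-1)$ is also even, so~\eqref{trivial} gives $k\ge d_1+D/d_1=6p^2$. Next I would plug the dimensions into~\eqref{new}: the numerator $d_2d_3+d_1d_3+d_1d_2-1$ equals $2(6p^2-2p)+4p^2-1=16p^2-4p-1$, which is odd, so after dividing by $N-1=2$ and taking the ceiling (as in Eq.~\eqref{eq:lower_bound}) I obtain $k\ge 8p^2-2p$. Finally, the difference $(8p^2-2p)-6p^2=2p(p-1)$ is strictly positive for $p\ge 2$, so~\eqref{new} is non-trivial in this case.

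For the second family $(d_1,d_2,d_3)=(2p-1,\tilde d,3p-2)$ with $2p-1\le \tilde d\le 3p-2$, the parity argument is slightly different: since $d_1=2p-1$ is odd,~\eqref{trivial} gives
\begin{equation*}
T(p,\tilde d):=d_1+\frac{D}{d_1}-1=2p-2+\tilde d(3p-2).
\end{equation*}
For our bound I would compute
\begin{equation*}
d_2d_3+d_1d_3+d_1d_2-1=\tilde d(5p-3)+6p^2-7p+1,
\end{equation*}
and denote half of this quantity by $B(p,\tilde d)$. The key step is the algebraic identity
\begin{equation*}
B(p,\tilde d)-T(p,\tilde d)=\frac{(p-1)(6p-5-\tilde d)}{2},
\end{equation*}
which relies on the factorization $6p^2-11p+5=(p-1)(6p-5)$. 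For $p\ge 2$ we have $p-1\ge 1$, and since $\tilde d\le 3p-2$ we have $6p-5-\tilde d\ge 3p-3\ge 3$, so the difference is at least $3/2$. Because $T$ is an integer, taking the ceiling $\lceil B\rceil$ still yields $\lceil B\rceil\ge T+1>T$, which completes the proof.

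The proof is mostly routine computation, so I do not expect a substantive obstacle; the only slightly non-obvious step is spotting the factorization $6p^2-11p+5=(p-1)(6p-5)$ that makes the positivity of the difference in case~2 transparent and uniform across the full range of $\tilde d$.
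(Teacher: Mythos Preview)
Your proof is correct and follows essentially the same approach as the paper: direct substitution into the two bounds and elementary comparison. The only cosmetic difference is in the second family, where you factor $6p^2-11p+5=(p-1)(6p-5)$ to obtain the closed form $B-T=\tfrac{(p-1)(6p-5-\tilde d)}{2}$, whereas the paper instead observes that the relevant expression is monotone in $\tilde d$ and checks only the extreme case $\tilde d=3p-2$; both arguments yield the desired strict inequality for all $p\ge 2$.
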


 The proof of Proposition~\ref{cor:p} is provided in Appendix~\ref{appendix:proposition}.

\vspace{0.4cm}


\section{Improved bound under conditions on  the local dimensions}\label{sec:improved_bound}

We now show that our bound (\ref{new})  can be slightly  improved if the local dimensions satisfy certain conditions:  

\begin{proposition}\label{pro:shakinghand}
  If  at least one of the local dimensions $(d_m)_{m=1}^N$ is even and the sum $\sum_{m=1}^N\frac{D}{d_m}-1$ is an odd multiple of $N-1$, then  the size of any GUPB in $\bbC^{d_1}\otimes \bbC^{d_2}\otimes\cdots\otimes \bbC^{d_N}$ is lower bounded as
\begin{equation}
       k\geq \frac{\sum_{m=1}^N\frac{D}{d_m}-1}{N-1}+1.
\end{equation}
\end{proposition}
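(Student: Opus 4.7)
The plan is to argue by contradiction using a parity/handshaking argument, in direct analogy with how Alon and Lov\'asz sharpened Bennett \emph{et al.}'s bound. The proposition's name (\texttt{shakinghand}) is a strong hint that the handshaking lemma is the key ingredient.

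First, I would suppose for contradiction that the bound from Theorem~\ref{thm:lower_bound} is saturated with equality, namely
\begin{equation*}
    k \;=\; \frac{\sum_{m=1}^N \frac{D}{d_m}-1}{N-1}.
\end{equation*}
Retracing the proof of Theorem~\ref{thm:lower_bound}, the bound was obtained by summing, over $m$, the per-vertex degree bound $\deg_{G_m}(v_i)\le k - D/d_m$ (from Lemma~\ref{lem:deg} with $\prod_{i\ne m}d_i = D/d_m$), and comparing the result with the lower bound $\sum_{m=1}^N \deg_{G_m}(v_i)\ge k-1$ implied by condition (i) of Lemma~\ref{lemma:suf_nec}. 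Saturation of the global bound forces both of these inequalities to hold with equality for \emph{every} vertex $v_i$ and every $m$. Consequently, each orthogonality graph $G_m$ must be regular of degree $k - D/d_m$.

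Next, I would exploit the two parity hypotheses. Because $\sum_{m=1}^N D/d_m -1$ is an odd multiple of $N-1$, the value $k=(\sum_m D/d_m-1)/(N-1)$ is an odd integer. Now let $d_{m_0}$ be an even local dimension (which exists by assumption). For any $m\ne m_0$, the number $D/d_m = \prod_{i\ne m} d_i$ contains $d_{m_0}$ as a factor, hence is even. Therefore the regularity degree of $G_m$,
\begin{equation*}
    k - \frac{D}{d_m} \;=\; \underbrace{k}_{\text{odd}} - \underbrace{\frac{D}{d_m}}_{\text{even}},
\end{equation*}
is odd. But then the sum of vertex degrees of $G_m$ equals (odd)$\times$(odd) $= $ odd, contradicting the handshaking lemma, which demands that $\sum_{v\in V}\deg_{G_m}(v) = 2|E_m|$ be even.

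This contradiction rules out equality, so $k$ must strictly exceed $(\sum_m D/d_m -1)/(N-1)$; since $k$ is an integer and $(\sum_m D/d_m -1)/(N-1)$ is also an integer by hypothesis, the next admissible value gives the claimed bound. I do not anticipate a serious obstacle: the only mild subtlety is checking that the degrees are truly forced to equality (which follows cleanly from the proof of Theorem~\ref{thm:lower_bound} because $\deg_{G_m}(v_i)\le k- D/d_m$ for each $m$ \emph{and} each $v_i$, so any slack anywhere would propagate into the global bound).
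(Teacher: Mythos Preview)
Your proposal is correct and follows essentially the same approach as the paper: assume saturation of bound~(\ref{new}), deduce that each $G_m$ is $(k-D/d_m)$-regular, and then derive a parity contradiction via the handshaking lemma by observing that $k$ is odd while $D/d_m$ is even for $m\neq m_0$. The only cosmetic difference is that the paper packages the regularity step as a separate result (Proposition~\ref{pro:regular_graph}) and cites it, whereas you rederive it inline from the proof of Theorem~\ref{thm:lower_bound}.
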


The proof of Proposition~\ref{pro:shakinghand} is provided in Appendix~\ref{appendix:proposition}.
For example, if $(d_1,d_2,d_3)=(3,4,5)$, then $k\geq 24$ for a GUPB of size $k$ in $\bbC^{3}\otimes \bbC^{4}\otimes \bbC^{5}$ by Proposition~\ref{pro:shakinghand}. Similarly, if $(d_1,d_2,d_3,d_4)=(4,4,4,4)$, then $k\geq 86$. This second example can be generalized to all situations in which the number of system $N$ is even and all local dimensions are equal to $N$:
\begin{corollary}
   If  $N$ is even and  $d_m  = N$ for every $m\in \{1,\dots,N  \}$,  then the minimum size of a GUPB in $\left(\bbC^{N}\right)^{\otimes N}$  is  lower  bounded as
   \begin{align}\label{NN}
   k  \ge   \frac{N^N-1}{N-1} +  1 \,.   
   \end{align}
\end{corollary}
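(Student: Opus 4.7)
The plan is to derive this corollary as an immediate specialization of Proposition~\ref{pro:shakinghand} to the case $d_1 = d_2 = \cdots = d_N = N$. So the whole task reduces to checking the two hypotheses of that proposition: (i) at least one local dimension is even, and (ii) $\sum_{m=1}^N \frac{D}{d_m} - 1$ is an odd multiple of $N - 1$. Since $N$ is assumed even and every $d_m$ equals $N$, hypothesis (i) is immediate: all local dimensions are even.

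For hypothesis (ii), I would first compute the sum using $D = N^N$ and $\frac{D}{d_m} = N^{N-1}$ for every $m$, which gives
\begin{equation*}
\sum_{m=1}^N \frac{D}{d_m} - 1 \;=\; N \cdot N^{N-1} - 1 \;=\; N^N - 1.
\end{equation*}
The quotient by $N - 1$ is then the geometric sum
\begin{equation*}
\frac{N^N - 1}{N - 1} \;=\; 1 + N + N^2 + \cdots + N^{N-1},
\end{equation*}
which is a nonnegative integer, so $N^N - 1$ is certainly a multiple of $N - 1$. What remains is to verify that this multiple is \emph{odd}.

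The parity check is the only substantive step, and it is where the hypothesis that $N$ is even is used. Since $N$ is even, every term $N^i$ with $i \geq 1$ in the geometric sum is even, so modulo $2$ only the $i = 0$ term survives, giving an odd total. Thus $N^N - 1$ is indeed an odd multiple of $N - 1$, hypothesis (ii) holds, and Proposition~\ref{pro:shakinghand} applies directly to yield the bound $k \geq \frac{N^N - 1}{N - 1} + 1$. There is no real obstacle in this argument; the only point to be careful about is that the parity check genuinely requires $N$ even, which explains why the corollary is stated under that assumption rather than for arbitrary $N$.
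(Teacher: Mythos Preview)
Your proposal is correct and matches the paper's approach: the corollary is presented there as an immediate specialization of Proposition~\ref{pro:shakinghand}, with no separate proof written out. Your verification of the two hypotheses (evenness of a local dimension, and oddness of the geometric sum $1+N+\cdots+N^{N-1}$ when $N$ is even) is exactly the check that is implicitly required.
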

The bound (\ref{NN}) is another example of a non-trivial bound, {\em i.e.} of a bound that improves over bound (\ref{trivial}).  Here the improvement is exponential:  for asymptotically large $N$, the difference between the r.h.s. of Eq.~\eqref{NN}  and the r.h.s. of Eq.~\eqref{trivial} grows as $N^{N-2}$.   It is also worth noting that the bound (\ref{NN}) provides also a small improvement over bound (\ref{old}) in a scenario where all local dimensions are equal. 


\section{Orthogonality graphs for minimal GUPBs}\label{sec:gupbs}

We now derive an analogue of  Proposition~\ref{pro:re} for GUPBs, showing that a certain kind of minimal GUPBs must be associated to regular graphs:

\begin{proposition}\label{pro:regular_graph}
For a minimal  GUPB saturating the bound (\ref{new}),  the  orthogonality graph $G_m$ is a  $(k-\frac{D}{d_m})$-regular graph for every $m\in  \{1,\dots, N\}$.
\end{proposition}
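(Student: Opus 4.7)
The plan is to observe that Proposition~\ref{pro:regular_graph} is essentially the equality case of the inequality chain used to prove Theorem~\ref{thm:lower_bound}, so the result will follow by tracking which inequalities become tight when the bound (\ref{new}) is saturated. I will work vertex by vertex, using the fact that the GUPB property forces two separate constraints on the local degrees: an upper bound from unextendibility across each single-party cut, and a lower bound from the union condition in Lemma~\ref{lemma:suf_nec}(i).

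First, I would fix an arbitrary vertex $v_i\in V$ and, for each $m\in\{1,\dots,N\}$, apply Lemma~\ref{lem:deg} to the bipartition $A_m \mid \{A_1\cdots A_N\}\setminus\{A_m\}$. Since a GUPB is a UPB across this cut, with local dimensions $d_m$ and $D/d_m$, the lemma yields
\[
\text{deg}_{G_m}(v_i) \;\leq\; k - 1 - \left(\tfrac{D}{d_m} - 1\right) \;=\; k - \tfrac{D}{d_m}.
\]
Next, by Lemma~\ref{lemma:suf_nec}(i) the union $\bigcup_{m=1}^N G_m$ equals the complete graph $K_k$, so every other vertex is adjacent to $v_i$ in at least one of the $G_m$'s. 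This gives
\[
\sum_{m=1}^N \text{deg}_{G_m}(v_i) \;\geq\; k - 1.
\]

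The key step is to combine these with the saturation hypothesis $k = \frac{\sum_{m=1}^N D/d_m - 1}{N-1}$, which rearranges to $\sum_{m=1}^N (k - D/d_m) = k - 1$. Summing the per-$m$ upper bound then gives
\[
\sum_{m=1}^N \text{deg}_{G_m}(v_i) \;\leq\; \sum_{m=1}^N \left(k - \tfrac{D}{d_m}\right) \;=\; k - 1,
\]
which is the same quantity as the lower bound above. Equality must therefore hold throughout, for every $v_i$ and every $m$, forcing $\text{deg}_{G_m}(v_i) = k - D/d_m$. Hence each $G_m$ is $(k - D/d_m)$-regular, as claimed.

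There is no substantive obstacle here: the proof is a short bookkeeping argument on the inequality chain of Theorem~\ref{thm:lower_bound}. The only subtlety worth flagging is that saturation also forces the local neighborhoods in the different $G_m$'s at every vertex to be pairwise disjoint (otherwise $\sum_m \text{deg}_{G_m}(v_i)$ would strictly exceed $k-1$), which is implicit in our equality argument but may be useful in the subsequent construction-oriented discussion.
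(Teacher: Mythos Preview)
Your proof is correct and follows essentially the same approach as the paper: apply Lemma~\ref{lem:deg} across each single-party bipartition to bound $\deg_{G_m}(v_i)\le k-D/d_m$, use $\bigcup_m G_m=K_k$ to get $\sum_m\deg_{G_m}(v_i)\ge k-1$, and observe that saturation of (\ref{new}) forces equality throughout. Your additional remark about the neighborhoods being pairwise disjoint at each vertex is a correct consequence not stated explicitly in the paper.
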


The proof of Proposition~\ref{pro:regular_graph} is provided in Appendix~\ref{appendix:proposition}. We now use Proposition~\ref{pro:regular_graph} to put forward  a possible approach to construct a  GUPB of minimal size and local dimension.   Since UPBs do not exist in $\bbC^2\otimes \bbC^n$ \cite{bennett1999unextendible,divincenzo2003unextendible}, the minimal setting for a GUPB is a three-qutrits system.   By bounds (\ref{old})  and (\ref{new}), we know that the size of a candidate GUPB must be at least $13$. 

Now, Proposition~\ref{pro:regular_graph} shows that, if there exists a GUPB of size $13$ in $\bbC^{3}\otimes \bbC^{3}\otimes \bbC^{3}$, then each orthogonality graph $G_m$ is a $4$-regular graph.    We then have the following proposition:

\begin{proposition}\label{pro:gupb444}
A set of product states  $\{\ket{\varphi_1^{(i)}}_{A_1} \ket{\varphi_2^{(i)}}_{A_2}  \ket{\varphi_3^{(i)}}_{A_3}  \}_{i=1}^{13}$ in $\bbC^3\otimes \bbC^3\otimes \bbC^3$ is a GUPB if the following three conditions hold,
\begin{enumerate}[(i)]
    \item \label{item:graph}$\bigcup_{m=1}^3G_m=K_{13}$, where each orthogonality graph $G_m$ is a $4$-regular graph;
    \item \label{item:choose5} the subspace spanned by any five states in $\{\ket{\varphi_{m}^{(i)}}_{A_{m}}\}_{i=1}^{13}$ has dimension $3$ for any $m=1,2,3$;
    \item \label{item:choose9} the subspace spanned by any nine states in $\{\ket{\varphi_{j_1}^{(i)}}_{A_{j_1}}\otimes \ket{\varphi_{j_2}^{(i)}}_{A_{j_2}}\}_{i=1}^{13}$ has dimension $9$
    for any $(j_1,j_2)\in \{(1,2),(1,3),(2,3)\}$.
\end{enumerate}
\end{proposition}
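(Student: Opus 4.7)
The plan is to invoke Lemma~\ref{lemma:suf_nec} once for each of the three bipartitions $A_m\mid\{A_1A_2A_3\}\setminus\{A_m\}$, $m\in\{1,2,3\}$. The first step is to identify the two orthogonality graphs associated to such a bipartition: on the single-system side one has $G_m$, while on the two-system side one has $G_{m'}\cup G_{m''}$, because two bipartite product vectors $\ket{\varphi_{m'}^{(i)}}\ket{\varphi_{m''}^{(i)}}$ and $\ket{\varphi_{m'}^{(j)}}\ket{\varphi_{m''}^{(j)}}$ are orthogonal precisely when at least one of their single-system factors is orthogonal. By symmetry it suffices to treat one bipartition in detail.

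Next I check the first condition of Lemma~\ref{lemma:suf_nec}: the union of the two bipartite orthogonality graphs is $G_m\cup G_{m'}\cup G_{m''}=G_1\cup G_2\cup G_3$, which by hypothesis (i) of the proposition equals $K_{13}$.

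For the second condition of Lemma~\ref{lemma:suf_nec}, I need to show that, for every pair $(W,W')$ of unsaturated sets relative to $G_m$ and $G_{m'}\cup G_{m''}$ respectively, $W\cup W'\neq V$. Hypothesis (ii) implies that every five vectors in $\{\ket{\varphi_m^{(i)}}\}_{i=1}^{13}$ span $\bbC^3$, so the unsaturated sets of $G_m$ have size at most $4$. Hypothesis (iii) implies that every nine vectors in $\{\ket{\varphi_{m'}^{(i)}}\otimes\ket{\varphi_{m''}^{(i)}}\}_{i=1}^{13}$ span the full nine-dimensional space $\bbC^3\otimes\bbC^3$, so the unsaturated sets of $G_{m'}\cup G_{m''}$ have size at most $8$. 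Consequently $|W\cup W'|\leq |W|+|W'|\leq 4+8=12<13=|V|$, and the desired condition follows.

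Both conditions of Lemma~\ref{lemma:suf_nec} holding for every bipartition, the set is a UPB across every bipartition, hence a GUPB. I do not expect any genuine obstacle in this proof; the only point to be careful about is the numerology, which is tight: the integers $5$ and $9$ in the hypotheses are chosen exactly so that the resulting bounds $4$ and $8$ on unsaturated-set sizes sum to $12=|V|-1$, which is just enough to close the argument.
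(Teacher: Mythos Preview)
Your proof is correct and follows exactly the approach the paper indicates: apply Lemma~\ref{lemma:suf_nec} to each of the three bipartitions, using hypothesis~(i) for condition~(i) of the lemma and hypotheses~(ii),~(iii) to bound the sizes of unsaturated sets by $4$ and $8$ respectively. The paper's own proof is a one-line ``Immediate from Lemma~\ref{lemma:suf_nec}\dots,'' so you have simply written out the details the paper leaves to the reader; as you implicitly notice, the $4$-regularity clause in hypothesis~(i) is not actually needed for the sufficiency argument.
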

\begin{proof}
Immediate from  Lemma~\ref{lemma:suf_nec} and the fact that  the orthogonality graphs $(G_m)_{m=1}^3$ are  $4$-regular. 
\end{proof}
\medskip 

Proposition~\ref{pro:gupb444} provides a possible approach  to construct a tripartite GUPB of minimum local dimension. 
There are three steps for constructing a GUPB of size $13$ in $\bbC^3\otimes \bbC^3\otimes \bbC^3$:
\begin{enumerate}
    \item decompose the complete graph $K_{13}$ into three $4$-regular graphs $(G_m)_{m=1}^3$;
\item find an orthogonal representation for each $G_m$ in $\bbC^3$;
\item check the conditions (\ref{item:choose5}) and (\ref{item:choose9}) of Proposition~\ref{pro:gupb444}.
\end{enumerate}

We now discuss the possible ways forward and the challenges arising in the above steps. Regarding step 1, there are many ways to decompose $K_{13}$ into three $4$-regular graphs. In particular, one can decompose $K_{13}$ into three Cayley graphs \cite{graham1995handbook}.
To do this, one has to consider the  group of integers modulo $13$, $\bbZ_{13}=\{0,\pm 1,\pm 2,\pm 3,\pm4, \pm5, \pm 6\}$.   Given a 2-element set $S=\{ p,  q\}$, where  $1\leq p\neq q\leq 6$, one can construct the Cayley graph $G^{(S)}=(V, E)$, where $V=\bbZ_{13}$, and $E=\{(a,b)\mid a-b\in S\cup (-S)\}$. By construction,    $G^{(S)}$ is a $4$-regular graph.  
 By partitioning the set $\{1,2,3,4,5,6\}$  into three 2-element subsets $S_1=\{p_1,  q_1\}$, $S_2=\{p_2,  q_2\}$, and
  $S_3=\{p_3,  q_3\}$,  we then obtain the desired decomposition   $K_{13}=\bigcup_{m=1}^3G^{(S_m)}$. Note that there are $\frac{C_6^2\times C_4^2\times C_2^2} {A_3^3}=15$ distinct partitions of $\{1,2,3,4,5,6\}$  into three 2-element subsets.   Hence,   there are $15$ distinct decompositions of $K_{13}$ into three Cayley graphs.   While step 1 is relative straightforward, a bottleneck arises in step 2, where one has to find an orthogonal representation of the graphs in the decomposition of $K_{13}$.   For each  decomposition,  our algorithm in Sec.~\ref{sec:pri}   can find the orthogonal representations of at most two graphs, leaving the third unspecified.   


The bottleneck of the orthogonal representations remains even if one replaces the decomposition into Cayley graphs with some other decomposition of $K_{13}$ in terms of regular graphs.  
 To better understand the origin of the problem,  we point out that it is not overwhelmingly difficult to find orthogonal representations for all 4-regular graphs with 13 vertices, as the total number of such graphs, up to isomorphism, is  10,880 \cite{meringer1999fast}. However, our algorithm for searching orthogonal representation does not ensure the fulfillment of condition (\ref{item:choose5}) or (\ref{item:choose9}) of Proposition \ref{pro:gupb444}. By iterating the algorithm on the same graph, one may hope to  find orthogonal representations fulfilling condition (\ref{item:choose5}) by chance. Unfortunately, we did not encounter any such solutions.
 One way to circumvent the problem would be to translate the condition (\ref{item:choose5}) into a constraint that has to be satisfied while searching for the orthogonal representation with our algorithm in Sec.~\ref{sec:pri}. The problem with this approach is that condition (\ref{item:choose5})  results in non-linear constraints, which heavily slow down the convergence of the optimization process. 
 We managed to run the modified algorithm twice on all 4-regular 13-vertex graphs, but did not find any orthogonal representation satisfying condition (\ref{item:choose5}).
 Due to these obstacles, finding an example of GUPB through the above route still requires a major investment of computational resources.

\section{conclusions}\label{sec:con}
In this paper, we  established a characterization of UPBs and GUPBs in terms of orthogonality graphs. Building on this characterization, we developed a condtructing method for finding UPBs in low dimensional systems, and we derived a new lower bound on the number of elements in any GUPB. Our bound  significantly improves on the state of the art \cite{PhysRevA.106.012442}, thus placing stronger restrictions on potential candidates of GUPBs.  Equivalently, our bound implies an upper  bound on the   rank of any  bound entangled state built from a GUPB.       Our results indicate a potential route to find a minimal tripartite GUPB consisting of  13 product vectors.  While the numerical search for such GUPB is still challenging, our construction helps clarify where the problems lie, and may eventually help find a suitable modification that is amenable to numerical search.    

Besides addressing the open problem of the existence of GUPBs,  we provided a systematic route to the construction of multipartite UPBs of any desired size between the minimum and the maximum.  
  Equivalently, our construction can be viewed as a systematic way of constructing bound entangled states of different ranks.    In addition,  our results have an application to the study of nonlocality without entanglement.  In Ref.  \cite{vsupic2022self}  it was shown that  quantum measurements exhibiting nonlocality without entanglement can be certified in a device-independent way.   Since our results provide a  systematic construction of multipartite  UPBs, the corresponding scenarios of nonlocality without entanglement are likely to  give rise to  new self-testing procedures.  Finally, another interesting direction is the study of non-trivial Bell inequalities with no quantum violation \cite{augusiak2011bell, Augusiak2012tight,fritz2013local}.  In this context, our work can be used to construct  such inequalities in multipartite systems with larger local dimensions,  going beyond the multiqubit scenario typically considered in the literature.

\section*{Acknowledgments}
\label{sec:ack}		
We thank Lin Chen and Yiwei Zhang for discussing this problem. F.S., G.B. and G.C. acknowledge funding from
the Hong Kong Research Grant Council through Grants No.
17300918 and No. 17307520, and through the Senior Research
Fellowship Scheme SRFS2021-7S02. 
This publication was made possible through the support of the ID\# 62312 grant
from the John Templeton Foundation, as part of the ‘The Quantum Information Structure of Spacetime’
  Project (QISS). The opinions expressed in this project are those of the
authors and do not necessarily reflect the views of the John Templeton Foundation. Research at the Perimeter Institute is supported by the
Government of Canada through the Department of Innovation, Science and Economic Development Canada and by the
Province of Ontario through the Ministry of Research, Innovation and Science.  X.Z. acknowledges funding from the NSFC under Grants No. 12171452 and No. 11771419,
the Anhui Initiative in Quantum Information Technologies under Grant No. AHY150200, the Innovation Program for Quantum
Science and Technology (2021ZD0302904),  and the National Key
Research and Development Program of China (2020YFA0713100).  Q.Z. acknowledges funding from the HKU Seed Fund for New Staff.

\bibliographystyle{IEEEtran}
\bibliography{reference}

\appendix

\section{Proofs of Lemmas~\ref{lemma:suf_nec}, \ref{lem:unsaturated},  and~\ref{lem:deg}  
}\label{appendix:lemma}
\setcounter{lemma}{0}
\begin{lemma}
Let   $\cU$ be a set of $k$ product vectors in $\bbC^{d_1}\otimes \bbC^{d_2}\otimes\cdots\otimes \bbC^{d_N}$, and let $(G_m)_{m=1}^N$ be the corresponding orthogonality graphs.   The set $\cU$ is a UPB if and only if the following conditions hold:
\begin{enumerate}[(i)]
    \item $\bigcup_{m=1}^NG_m=K_k$;
    \item  $\bigcup_{m=1}^NW_m\neq V$ for every $N$-tuple  $(W_1 ,W_2,\ldots,W_N) $ in which  $W_m$ is an unsaturated set for  $G_m$ for  every $m\in  \{1,
    \dots, N\}$.  \end{enumerate}
\end{lemma}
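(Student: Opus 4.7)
The plan is to split the equivalence into two independent pieces, one for each condition, and to show that together they encode exactly the two defining properties of a UPB, namely pairwise orthogonality of the vectors in $\cU$ and non-existence of a product vector in the orthogonal complement of $\lin \cU$.

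First I would handle condition (i). By the definition of the graphs $G_m$, a pair $(v_i,v_j)$ lies in the edge set of $\bigcup_{m=1}^N G_m$ iff $\braket{\varphi_m^{(i)}}{\varphi_m^{(j)}}=0$ for at least one $m$. Since the product-state inner product factors as $\prod_{m=1}^N \braket{\varphi_m^{(i)}}{\varphi_m^{(j)}}$, vanishing of this product is equivalent to the vanishing of at least one factor. Hence $\bigcup_m G_m=K_k$ is equivalent to saying that the $k$ product vectors of $\cU$ are mutually orthogonal, which is the first half of being a UPB. This step is entirely mechanical.

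Next I would prove that, assuming the vectors in $\cU$ are pairwise orthogonal, condition (ii) is equivalent to saying that the orthogonal complement of $\lin\cU$ contains no (nonzero) product vector. I would do this by establishing a dictionary between product vectors in the complement and $N$-tuples $(W_1,\dots,W_N)$ of unsaturated sets with $\bigcup_m W_m=V$. For the forward direction, starting from a nonzero product vector $\ket{\psi_1}\cdots\ket{\psi_N}$ orthogonal to every element of $\cU$, define
\begin{equation}
W_m:=\{v_i\in V \mid \braket{\psi_m}{\varphi_m^{(i)}}=0\}.
\end{equation}
Every $\ket{\varphi_m^{(i)}}$ with $v_i\in W_m$ lies in the proper subspace $\ket{\psi_m}^\perp\subset\bbC^{d_m}$, so each $W_m$ is unsaturated; the factorisation $\prod_m \braket{\psi_m}{\varphi_m^{(i)}}=0$ for every $i$ forces $\bigcup_m W_m=V$. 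For the reverse direction, given unsaturated $(W_m)_{m=1}^N$ with $\bigcup_m W_m=V$, pick for each $m$ a nonzero $\ket{\psi_m}$ in the orthogonal complement of $\lin\{\ket{\varphi_m^{(i)}}\mid v_i\in W_m\}$ (which exists because that span is a proper subspace of $\bbC^{d_m}$); the resulting $\ket{\psi_1}\cdots\ket{\psi_N}$ is nonzero and, since every $v_i$ sits in some $W_m$, is orthogonal to each element of $\cU$, producing a product vector in the orthogonal complement of $\lin\cU$.

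Combining the two parts gives the claimed ``iff''. The argument does not require any deep construction, so I do not anticipate a genuine obstacle; the only delicate point is bookkeeping with the zero vector and with the possibility that some factor $\ket{\psi_m}$ could formally be zero. I would handle this by always choosing the $\ket{\psi_m}$'s inside the (nontrivial) orthogonal complement of $\lin\{\ket{\varphi_m^{(i)}}\mid v_i\in W_m\}$, which is guaranteed to contain a nonzero vector precisely because $W_m$ is unsaturated, and by noting that in the forward direction the $W_m$'s are defined from a product vector all of whose factors are nonzero.
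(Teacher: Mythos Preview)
Your proposal is correct and follows essentially the same route as the paper: both first observe that condition (i) encodes pairwise orthogonality via the factorisation of the inner product, and then prove the contrapositive equivalence between condition (ii) and unextendibility by passing back and forth between a product vector $\ket{\psi_1}\cdots\ket{\psi_N}$ in the complement and the $N$-tuple $W_m=\{v_i\mid \braket{\psi_m}{\varphi_m^{(i)}}=0\}$. Your treatment is slightly more explicit about why each $W_m$ is unsaturated (it sits inside the proper subspace $\ket{\psi_m}^\perp$) and about the nonzero-vector bookkeeping, but the argument is the same.
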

\begin{proof}
 The proof builds on arguments  by Bennett {\em et al.} (cf. Lemma 1 of~\cite{bennett1999unextendible}), which are translated here into the graph theoretic framework of our paper by using the notion of saturated set.    First,  we observe that  the product states in the set $\cU$  are mutually orthogonal  if and only if $\bigcup_{m=1}^NG_m=K_k$. Hence, we only need to show that a set of orthogonal product states $\cU$  is unextendible if and only if condition (ii) holds. 
 
 The ``only if'' part is proven by contrapositive: we show that if condition (ii) is violated, then the set $\cU$ must be extendible. The proof is as follows:  If there exists an unsaturated set $W_m$ of $G_m$ for every $1\leq m\leq N$ such that $\bigcup_{m=1}^NW_m=V$, then we can find a state $\ket{\psi}_{A_m} \in \bbC^{d_m}$ that is orthogonal to any state in $\{\ket{\varphi_m^{(i)}}_{A_m}\mid v_i\in W_m\}$ for every $1\leq m\leq N$, and $\ket{\psi_1}_{A_1}\ket{\psi_2}_{A_2}\cdots \ket{\psi_N}_{A_N}$ is orthogonal to any state in $\cU$. 
 
For the ``if'' part, we also proceed by contrapositive: we  assume that $\cU$ is extendible and show that condition (ii) must be violated.  If $\cU$ is extendible, then there exists a product state $\ket{\psi_1}_{A_1}\ket{\psi_2}_{A_2}\cdots \ket{\psi_N}_{A_N}$ that is orthogonal to any state in $\cU$. Let $W_m=\{v_i\mid \braket{\psi_m}{\varphi_m^{(i)}}_{A_m}=0\}$ for every $1\leq m\leq N$, then $W_m$ must be an  unsaturated set of $G_m$ for every $1\leq m\leq N$, and $\bigcup_{m=1}^NW_m=V$.
\end{proof}

\begin{figure*}[t]
		\centering
		\includegraphics[scale=0.6]{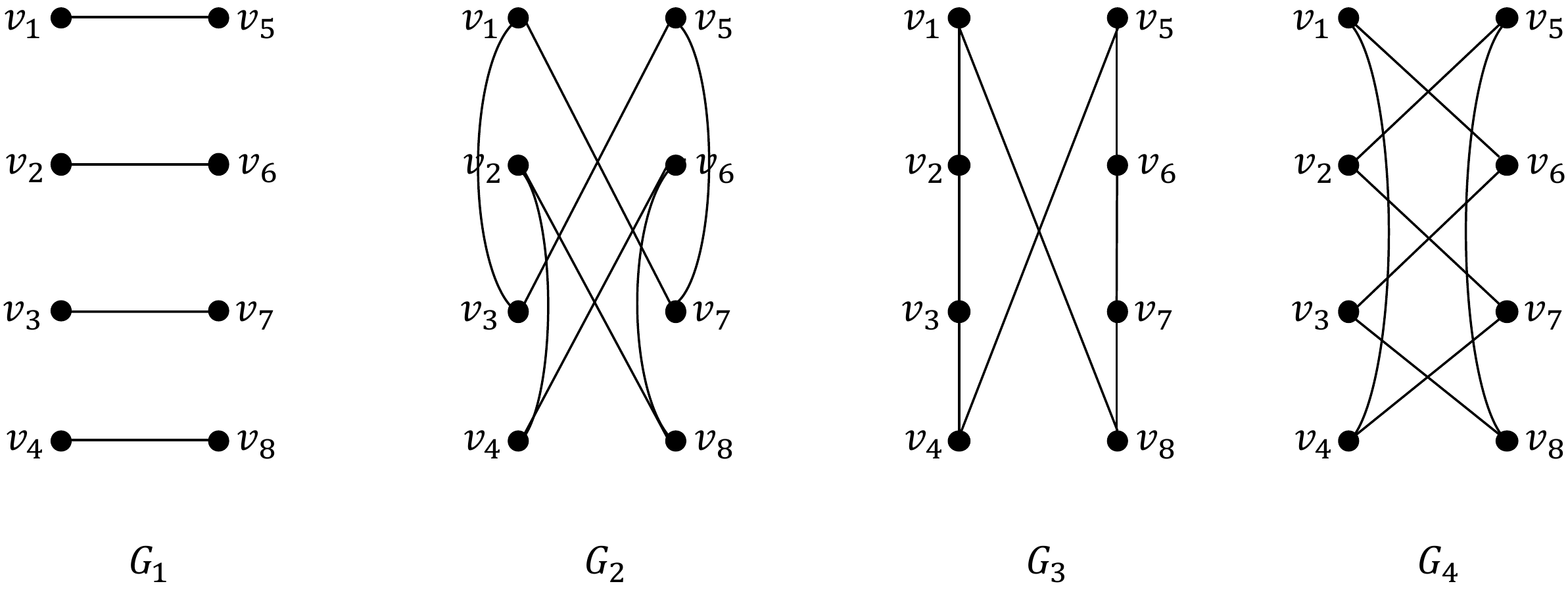}
		\caption{Orthogonality graphs of the UPB  in Example \ref{example:upb2233}. } \label{fig:orthogonality_graph_2233}
	\end{figure*}

\begin{lemma}
 Let $(G_m)_{m=1}^N$ be the orthogonality graphs associated to a UPB of size $k$.  Then, the size of any unsaturated set $W_m$ in $G_m$   is upper bounded as 
 \begin{align}
 |W_m|  \le  k   -1-   \sum_{i\in  \{1,\dots,  N\}  \setminus \{m\} }  \,  (d_i-1) \,.
 \end{align}
 \end{lemma}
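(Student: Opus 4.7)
The plan is to reduce the bound on $|W_m|$ to Parthasarathy's dimension bound for completely entangled subspaces (cited in the introduction), using the unsaturation of $W_m$ to manufacture a ``free'' vector on party $A_m$. By the definition of unsaturation, the vectors $\{\ket{\varphi_m^{(i)}}:v_i\in W_m\}$ fail to span $\bbC^{d_m}$, so I can fix a nonzero $\ket{\psi_m}\in\bbC^{d_m}$ that is orthogonal to all of them. This $\ket{\psi_m}$ will be used to pull the global unextendibility condition down to an $(N-1)$-partite condition on the remaining parties, restricted to the indices outside $W_m$.

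Concretely, I will look at the truncated collection
$$S:=\Bigl\{\bigotimes_{j\in\{1,\ldots,N\}\setminus\{m\}}\ket{\varphi_j^{(i)}}_{A_j}\;:\;v_i\in V\setminus W_m\Bigr\}\subset\bigotimes_{j\neq m}\bbC^{d_j},$$
and argue that no product vector lies in the orthogonal complement of $\lin(S)$. Indeed, if some product vector $\bigotimes_{j\neq m}\ket{\eta_j}_{A_j}$ were orthogonal to every element of $S$, then the product state $\ket{\psi_m}_{A_m}\otimes\bigotimes_{j\neq m}\ket{\eta_j}_{A_j}$ would be orthogonal to every element of $\cU$ (automatically on indices $v_i\in W_m$ through the factor $\ket{\psi_m}$, and by construction on indices $v_i\in V\setminus W_m$), contradicting the unextendibility of $\cU$. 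Consequently, the orthogonal complement of $\lin(S)$ inside $\bigotimes_{j\neq m}\bbC^{d_j}$ is a completely entangled subspace, and Parthasarathy's bound gives $\dim\lin(S)\ge \sum_{j\neq m}(d_j-1)+1$. Combining this with $|V\setminus W_m|=|S|\ge \dim\lin(S)$ and rearranging produces the claimed inequality.

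The only point that requires a moment of care is the degenerate case in which $\lin(S)$ is already all of $\bigotimes_{j\neq m}\bbC^{d_j}$: here the orthogonal complement is trivial and the Parthasarathy step is vacuous, but the inequality is still satisfied, so no real obstacle arises. The heart of the argument is simply the ``lift'' from the $(N-1)$-partite no-product-in-the-complement condition back to the $N$-partite unextendibility of $\cU$ via the orthogonal direction $\ket{\psi_m}$; everything else is a direct invocation of the Parthasarathy dimension bound, which is exactly the bound underlying Bennett \emph{et al.}'s inequality~(\ref{UPBbound}) applied to the reduced system $\bigotimes_{j\neq m}\bbC^{d_j}$.
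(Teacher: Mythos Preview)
Your argument is correct, but it takes a different route from the paper's. The paper proves the lemma by a direct, self-contained contradiction using only Lemma~\ref{lemma:suf_nec}: if $|W_m|$ were too large, the complement $V\setminus W_m$ would have at most $\sum_{j\neq m}(d_j-1)$ vertices, which one can then partition into $N-1$ pieces of sizes at most $d_j-1$; each such piece is automatically unsaturated in $G_j$ (too few vectors to span $\bbC^{d_j}$), and together with $W_m$ these cover $V$, violating condition~(ii) of Lemma~\ref{lemma:suf_nec}. Your approach instead black-boxes this partitioning step by invoking Parthasarathy's bound on the dimension of a completely entangled subspace: you observe that the $(N-1)$-partite span of the truncated vectors indexed by $V\setminus W_m$ has a product-free orthogonal complement (else the UPB would be extendible via the free direction $\ket{\psi_m}$), and then the dimension bound forces $|V\setminus W_m|\ge\sum_{j\neq m}(d_j-1)+1$. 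The two arguments are essentially dual---the paper's partitioning construction is exactly the proof of the Bennett/Parthasarathy bound unrolled inline---so your version is more conceptual and modular, while the paper's is more elementary and avoids the external citation. One cosmetic nit: the equality $|V\setminus W_m|=|S|$ need not hold if some truncated product vectors coincide, but only the inequality $|V\setminus W_m|\ge\dim\lin(S)$ is needed, and that is immediate.
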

\begin{proof}
  The proof is by contradiction.  Suppose that there existed an integer $m_0\in  \{1,\dots, N\}$ and an unsaturated set $W_{m_0}$ such that $|W_{m_0}|\ge  k   -   \sum_{i \not  =  m_0}   \,  (d_i-1) $.     Then, the set $V\setminus  W_{m_0}$ contains  $l\le \sum_{i \not  =  m_0}   \,  (d_i-1)$ vertices.   These $l$ vertices can be divided into $N-1$ subsets, putting at most $d_m-1$ vertices in the $m$-th subset, for every $m  \in  \{1,\dots, N\}\setminus  \{m_0\}$. 
  The $m$-th subset, denoted by $W_m$,  is by construction an unsaturated set in $G_m$.   Also, the above construction guarantees that $\bigcup_{m=1}^N  W_m  =  V$.  But   this condition is in contradiction with the fact that the graphs $(G_m)_{m=1}^N$ are the orthogonality graphs of a UPB, because  Lemma \ref{lemma:suf_nec} showed the relation $\bigcup_{m=1}^N  W_m \not =  V$ for every  $N$-tuple of unsaturated subsets  $(W_1,\dots,  W_N)$.  This concludes the proof by contradiction.  
  \end{proof}

\begin{lemma}
For every  UPB in   $\bbC^{d_1}\otimes \bbC^{d_2}\otimes\cdots\otimes \bbC^{d_N}$, the degrees of the vertices in the  orthogonality graphs $(G_m)_{m=1}^N$ must satisfy the condition 
\begin{equation}
\begin{aligned}
  &d_m-1\leq {\rm deg}_{G_m}(v_i)\leq k-1-  \sum_{i \in  \{1,\dots,  N\}\setminus \{m\}}(d_i-1),\\
  &\qquad \forall v_i\in  V \, ,\quad \forall  m\in  \{1,\dots,  N\} \,.
  \end{aligned}
\end{equation}
\end{lemma}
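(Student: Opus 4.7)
The plan is to prove the two sides separately, each by applying one of the structural facts established just above: for the upper bound I invoke Lemma~\ref{lem:unsaturated}, while for the lower bound I invoke the defining property of a UPB directly.

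\emph{Upper bound.} I apply Lemma~\ref{lem:unsaturated} to the open neighborhood $W_m := N_{G_m}(v_i)$. By the definition of $G_m$, every vector $\ket{\varphi_m^{(j)}}_{A_m}$ with $v_j \in W_m$ is orthogonal to $\ket{\varphi_m^{(i)}}_{A_m}$, so the whole family lies in the hyperplane of $\bbC^{d_m}$ perpendicular to $\ket{\varphi_m^{(i)}}_{A_m}$. This hyperplane has dimension $d_m-1<d_m$, hence $W_m$ cannot span $\bbC^{d_m}$ and is therefore unsaturated. Lemma~\ref{lem:unsaturated} then gives $\deg_{G_m}(v_i) = |W_m| \le k-1-\sum_{i' \in \{1,\dots,N\}\setminus\{m\}}(d_{i'}-1)$, which is the required upper bound.

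\emph{Lower bound.} I show that the closed neighborhood $\{v_i\} \cup N_{G_m}(v_i)$ is saturated in $G_m$; since any saturated subset must contain at least $d_m$ vertices, this immediately yields $\deg_{G_m}(v_i) \ge d_m-1$. I argue by contradiction: if the vectors $\{\ket{\varphi_m^{(j)}}_{A_m} : v_j \in \{v_i\} \cup N_{G_m}(v_i)\}$ failed to span $\bbC^{d_m}$, there would be a non-zero $\ket{\phi}_{A_m}$ orthogonal to all of them. I form the product vector obtained from $\ket{\varphi_1^{(i)}}_{A_1}\cdots\ket{\varphi_N^{(i)}}_{A_N}$ by replacing its $m$-th tensor factor with $\ket{\phi}_{A_m}$, and check that it is orthogonal to every element of $\cU$: orthogonality to $\ket{\psi^{(i)}}$ and to every $\ket{\psi^{(j)}}$ with $v_j \in N_{G_m}(v_i)$ comes from the $m$-th factor by construction, while for each $v_j \notin \{v_i\} \cup N_{G_m}(v_i)$ the orthogonality of $\ket{\psi^{(i)}}$ and $\ket{\psi^{(j)}}$ in the UPB must be witnessed by some $m'\neq m$, and that same $m'$-th factor delivers orthogonality for the newly built vector as well. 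This extra product vector contradicts the unextendibility of $\cU$.

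The argument is short, so the real difficulty is conceptual rather than technical: one must spot that the upper bound wants the \emph{open} neighborhood of $v_i$ (whose vectors are confined to a proper subspace of $\bbC^{d_m}$), whereas the lower bound wants the \emph{closed} neighborhood---because precisely the vertices lying outside the closed neighborhood are those whose orthogonality with $\ket{\psi^{(i)}}$ in the UPB is forced to come from a subsystem $m'\neq m$, which is exactly what allows the extension argument to leave every other tensor factor untouched.
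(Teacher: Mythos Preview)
Your proof is correct and follows essentially the same approach as the paper: the upper bound is obtained by recognizing that the open neighborhood $N_{G_m}(v_i)$ is unsaturated and invoking Lemma~\ref{lem:unsaturated}, while the lower bound is obtained by the same extension argument (replacing the $m$-th factor of $\ket{\psi^{(i)}}$ by a vector $\ket{\phi}$ orthogonal to the closed neighborhood). Your write-up is in fact slightly more explicit than the paper's, since you spell out the case analysis verifying that the new product vector is orthogonal to every $\ket{\psi^{(j)}}$, and you frame the lower bound as ``the closed neighborhood is saturated'' rather than directly assuming $\deg_{G_m}(v_i)\le d_m-2$, but these are cosmetic differences only.
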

\begin{proof}
The upper bound is immediate  from the fact that the degree ${\rm deg}_{G_m}(v_i)  =  |N_{G_m}(v_i) |$, where $N_{G_m}(v_i)$  is the neighborhood of $v_i$ in $G_m$.  Since the neighborhood of a vertex in an orthogonality graph is, by definition, an unsaturated set, the upper bound on its size follows from Lemma \ref{lem:unsaturated}. 

For the lower bound, we assume that there exists an orthogonality graph $G_m=(V,E_m)$ and a vertex $v_j\in V$ such that ${\rm deg}_{G_m}(v_j)\leq d_m-2$. Then we can find a state $\ket{\phi}_{A_m}$ in $\bbC^{d_m}$ which is orthogonal to any state in $\{\ket{\varphi_m^{(i)}}_{A_m}\mid v_i\in \{v_j\}\cup N_{G_m}(v_j)\}$. The product state $\ket{\varphi_1^{(j)}}_{A_1} \cdots \ket{\varphi_{m-1}^{(j)}}_{A_{m-1}} \ket{\phi}_{A_m}\ket{\varphi_{m+1}^{(j)}}_{A_{m+1}}\cdots \ket{\varphi_N^{(j)}}_{A_N}$ is orthogonal to any state in $\{\ket{\varphi_1^{(i)}}_{A_1} \ket{\varphi_2^{(i)}}_{A_2} \cdots \ket{\varphi_N^{(i)}}_{A_N}  \}_{i=1}^{k}$, which contradicts that $\{\ket{\varphi_1^{(i)}}_{A_1} \ket{\varphi_2^{(i)}}_{A_2} \cdots \ket{\varphi_N^{(i)}}_{A_N}  \}_{i=1}^{k}$ is a UPB. 
\end{proof}

\begin{example}\label{example:upb2233}
    The following product vectors form a UPB of size 8 in $\bbC^2\otimes \bbC^2\otimes \bbC^3 \otimes \bbC^3$:
    \begin{widetext}
\begin{equation}\label{eq:upb2233}
    \begin{aligned}
    \ket{\psi_1}=&\ket{0}_{A_1}     \ket{0}_{A_2} (\ket{0}+\ket{1}+\ket{2})_{A_3} (\ket{0}+\ket{1}+\ket{2})_{A_4},\\  
    \ket{\psi_2}=&(\ket{0}+\ket{1})_{A_1} (\ket{0}+\ket{1})_{A_2} (\ket{0}+\ket{1}-2\ket{2})_{A_3}(2\ket{0}-\ket{1}-2\ket{2})_{A_4},  \\
    \ket{\psi_3}=&(\ket{0}+2\ket{1})_{A_1} \ket{1}_{A_2} (4\ket{0}+2\ket{1}+3\ket{2})_{A_3}(3\ket{0}+6\ket{1}-2\ket{2})_{A_4}, \\ 
    \ket{\psi_4}=&(\ket{0}+3\ket{1})_{A_1} (\ket{0}-\ket{1})_{A_2}(2\ket{0}-\ket{1}-2\ket{2})_{A_3}(\ket{0}+\ket{1}-2\ket{2})_{A_4},\\
    \ket{\psi_5}=&\ket{1}_{A_1} \ket{0}_{A_2}(\ket{0}+4\ket{1}-\ket{2})_{A_3}(\ket{0}+4\ket{1}-\ket{2})_{A_4},\\ 
    \ket{\psi_6}=&(\ket{0}-\ket{1})_{A_1}(\ket{0}+\ket{1})_{A_2}(2\ket{0}+\ket{1}+6\ket{2})_{A_3}(-8\ket{0}+5\ket{1}+3\ket{2})_{A_4},\\
   \ket{\psi_7}=&(2\ket{0}-\ket{1})_{A_1}\ket{1}_{A_2} (3\ket{0}+6\ket{1}-2\ket{2})_{A_3}(4\ket{0}+2\ket{1}+3\ket{2})_{A_4},\\ 
    \ket{\psi_8}=&(3\ket{0}-\ket{1})_{A_1}(\ket{0}-\ket{1})_{A_2}(-8\ket{0}+5\ket{1}+3\ket{2})_{A_3}(2\ket{0}+\ket{1}+6\ket{2})_{A_4}.
    \end{aligned}
\end{equation}
   \end{widetext}  
\end{example}
\begin{proof}
The   orthogonality graphs $(G_m)_{m=1}^4$ of $\{\ket{\psi_i}\}_{i=1}^8$ is showed in  Fig.~\ref{fig:orthogonality_graph_2233}. Then it is easy to check that $\cup_{m=1}^4 G_m=K_8$. 
Hence, the first condition in Lemma~\ref{lemma:suf_nec} is satisfied. 
Regarding the second condition, note that every two vectors in  the set $\{\ket{0}, (\ket{0}+\ket{1}), (\ket{0}+2\ket{1}), (\ket{0}+3\ket{1}), \ket{1}, (\ket{0}-\ket{1}), (2\ket{0}-\ket{1}), (3\ket{0}-\ket{1})\}\subset \bbC^2$ are linearly independent.  Hence,  the size of any unsaturated set $W_1$ in $G_1$ can be at most 1. Since every three vectors in  the set $\{\ket{0}, (\ket{0}+\ket{1}), \ket{1}, (\ket{0}-\ket{1}), \ket{0}, (\ket{0}+\ket{1}), \ket{1}, (\ket{0}-\ket{1})\}\subset \bbC^2$ are linearly independent,  the size of any unsaturated set $W_2$ in $G_2$ can be at most 2. Moreover, since any three vectors in the set  $\{(\ket{0}+\ket{1}+\ket{2}) , (\ket{0}+\ket{1}-2\ket{2}), (4\ket{0}+2\ket{1}+3\ket{2}), (2\ket{0}-\ket{1}-2\ket{2}), (\ket{0}+4\ket{1}-\ket{2}), (2\ket{0}+\ket{1}+6\ket{2}), (3\ket{0}+6\ket{1}-2\ket{2}), (-8\ket{0}+5\ket{1}+3\ket{2}) \}\subset \bbC^3$ are linearly independent,  the size of any unsaturated set $W_m$ in $G_m$ is at most 2 for every $3\leq m\leq 4$. Putting everything together, we obtain that the union of  any four  unsaturated sets $(W_m)_{m=1}^4$ cannot contain all vertices in $V$. Since both conditions in Lemma~\ref{lemma:suf_nec} are satisfied,  we conclude that the vectors  $\{\ket{\psi_i}\}_{i=1}^8$ form a UPB.

\end{proof}

\section{Proofs of Propositions~\ref{prop:boundcomparison},~\ref{cor:p},~\ref{pro:shakinghand}, and \ref{pro:regular_graph}}\label{appendix:proposition}
\setcounter{proposition}{1}

 \begin{proposition}
The r.h.s. of Eq.~\eqref{eq:lower_bound} is always larger than or equal to the r.h.s. of Eq.~\eqref{old}.
 \end{proposition}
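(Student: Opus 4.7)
The plan is to reduce the inequality to a one-variable comparison by using the ordering $d_1\le d_2\le\cdots\le d_N$ and then carry out a careful floor/ceiling analysis. Since $d_{\max}=d_N$ and $d_m\le d_{\max}$ for every $m$, each term $D/d_m$ is at least $D/d_{\max}$, and therefore $\sum_{m=1}^N D/d_m\ge N\cdot D/d_{\max}$. Substituting into the left-hand side of Eq.~\eqref{eq:lower_bound} reduces the problem to showing
\begin{equation}
\Bigl\lceil\frac{N\,x-1}{N-1}\Bigr\rceil\;\ge\;x+\Bigl\lfloor\frac{x-2}{N-1}\Bigr\rfloor+1,
\end{equation}
where $x:=D/d_{\max}$ is a positive integer (each $d_m$ divides $D$).

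Next, I would write the Euclidean division $x-2=q(N-1)+r$ with $0\le r\le N-2$. This makes the right-hand side equal to $x+q+1$. For the left-hand side, I would rewrite $(Nx-1)/(N-1)=x+(x-1)/(N-1)$, and since $x$ is an integer, the ceiling distributes as
\begin{equation}
\Bigl\lceil\frac{Nx-1}{N-1}\Bigr\rceil \;=\; x+\Bigl\lceil\frac{x-1}{N-1}\Bigr\rceil.
\end{equation}
From $x-1=q(N-1)+(r+1)$ with $1\le r+1\le N-1$, the fractional part $(r+1)/(N-1)$ lies in $(0,1]$, so $\lceil(x-1)/(N-1)\rceil=q+1$ in every case.

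Combining the two computations gives $\lceil(Nx-1)/(N-1)\rceil=x+q+1$, which matches the right-hand side exactly, establishing the desired inequality. The only delicate point, which would be the main thing to double-check, is the boundary case $r=N-2$ in which $(x-1)/(N-1)$ is an integer; here the ceiling still evaluates to $q+1$, so the identity holds uniformly. No other case analysis is required, and the argument does not rely on any special structure of the dimensions beyond $d_m\mid D$ and the ordering assumption.
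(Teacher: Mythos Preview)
Your proof is correct and follows essentially the same approach as the paper: both use the key inequality $\sum_{m=1}^N D/d_m \ge N\,D/d_{\max}$ together with an elementary floor/ceiling computation. The only cosmetic difference is the order of operations: the paper first proves the identity $\lceil (S-1)/(N-1)\rceil=\lfloor (S-2)/(N-1)\rfloor+1$ for $S=\sum_m D/d_m$ and then applies the bound inside the floor, whereas you first bound inside the ceiling (using monotonicity) and then carry out an explicit Euclidean division to evaluate the resulting ceiling exactly.
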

\begin{proof}
 First, we show that r.h.s. of Eq.~\eqref{eq:lower_bound} satisfies the equality 
 \begin{align}\label{eq:equal}
 \fc{\sum_{m=1}^N\frac{D}{d_m}-1}{N-1}=\fl{\sum_{m=1}^N\frac{D}{d_m}-2}{N-1}+1 \,.
 \end{align}
 If the number $s:  =\frac{\sum_{m=1}^N\frac{D}{d_m}-1}{N-1}$ is an integer, 
then one has the relations $s  = \fc{\sum_{m=1}^N\frac{D}{d_m}-1}{N-1}$ and $\fl{\sum_{m=1}^N\frac{D}{d_m}-2}{N-1}+1=\fl{s(N-1)-1}{N-1}+1=s$. If instead $\frac{\sum_{m=1}^N\frac{D}{d_m}-1}{N-1}$ is not an integer, it can be written as $\frac{\sum_{m=1}^N\frac{D}{d_m}-1}{N-1}=s+\frac{t}{N-1}$ for some integer $s$ and some integer $1\leq t\leq N-2$. Then, $\fc{\sum_{m=1}^N\frac{D}{d_m}-1}{N-1}=s+1$, and  $\fl{\sum_{m=1}^N\frac{D}{d_m}-2}{N-1}+1=s+\fl{t-1}{N-1}+1=s+1$. Thus, $k\geq \fc{\sum_{m=1}^N\frac{D}{d_m}-1}{N-1}=\fl{\sum_{m=1}^N\frac{D}{d_m}-2}{N-1}+1$. 

To conclude, we use the bound 
\begin{equation}\label{bb}
 \begin{aligned} &\fl{\sum_{m=1}^N\frac{D}{d_m}-2}{N-1}+1\geq \fl{N\frac{D}{d_{\text{max}}}-2}{N-1}+1\\
 &=\frac{D}{d_{\text{max}}}+\fl{\frac{D}{d_{\text{max}}}-2}{N-1}+1 \,,
 \end{aligned} 
 \end{equation}
 where the last term in the inequality is the r.h.s. of Eq.~\eqref{old}.   Combining Eqs.~\eqref{eq:equal} and (\ref{bb}) we then obtain that the r.h.s. of Eq.~\eqref{eq:lower_bound} is larger than or equal to the r.h.s. of Eq.~\eqref{old}. 
 \end{proof}

\begin{proposition}
In the tripartite case, the  bound  (\ref{new})  is  non-trivial when  
 $(d_1,d_2,d_3)= (2p,2p,3p-1)$ for some integer $p\ge 2$, and when $(d_1,d_2,d_3)= (2p-1,\tilde{d},3p-2)$ for some integer $p\geq 2$ and some integer $\tilde d  \in  [2p-1,  3p-2]$. 
\end{proposition}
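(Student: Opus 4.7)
The proof is a direct arithmetical comparison between the new bound (\ref{new}) and the trivial bound (\ref{trivial}) in each of the two specified families. There is no conceptual obstacle, only careful bookkeeping of the parity conditions in (\ref{trivial}) and of the ceiling in (\ref{new}).

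First I would treat the family $(d_1,d_2,d_3)=(2p,2p,3p-1)$, noting that the ordering $d_1\le d_2\le d_3$ and the requirement $d_m\ge 3$ are satisfied for $p\ge 2$. Here both $d_1 = 2p$ and $D/d_1 = 2p(3p-1)$ are even, so the trivial bound reads $T = 2p + 2p(3p-1) = 6p^2$. Summing the three quantities $D/d_1 = D/d_2 = 2p(3p-1)$ and $D/d_3 = 4p^2$ gives $\sum_{m=1}^3 D/d_m = 16p^2 - 4p$, and bound (\ref{new}) reads $K = \lceil (16p^2 - 4p - 1)/2\rceil = 8p^2 - 2p$. The difference $K - T = 2p(p-1)$ is strictly positive for every integer $p\ge 2$, establishing non-triviality in this family.

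For the family $(d_1,d_2,d_3) = (2p-1,\tilde d, 3p-2)$ with $\tilde d\in[2p-1,3p-2]$, the ordering and dimension conditions are again satisfied for $p\ge 2$. Since $d_1 = 2p-1$ is odd, the trivial bound reduces to $T = (2p-1) + \tilde d(3p-2) - 1$. Setting
\begin{equation*}
    K' \;:=\; \frac{\tilde d(3p-2) + (2p-1)(3p-2) + (2p-1)\tilde d - 1}{2} \,,
\end{equation*}
bound (\ref{new}) gives $K \ge K'$. A direct simplification yields
\begin{equation*}
    2K' - 2T \;=\; 6p^2 - 11p + 5 - (p-1)\tilde d \,.
\end{equation*}
Since the coefficient of $\tilde d$ is non-positive for $p\ge 2$, the worst case is $\tilde d = 3p-2$, giving $2K' - 2T = 3(p-1)^2$, hence $K' - T \ge 3(p-1)^2/2 > 0$ for $p\ge 2$. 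Because $K = \lceil K' \rceil$ and $T$ are both integers, $K' > T$ upgrades to $K \ge T+1$, proving non-triviality.

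The only subtlety worth flagging as an ``obstacle'' is the interplay between the ceiling in (\ref{new}) and the two parity cases of (\ref{trivial}); the argument above handles it cleanly by working with the unrounded quantity $K'$ and invoking integrality of $K$ and $T$ only at the very end. Everything else is straightforward polynomial arithmetic in $p$ and $\tilde d$.
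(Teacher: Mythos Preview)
Your proof is correct and follows essentially the same route as the paper: a direct arithmetic comparison of bound~(\ref{new}) with bound~(\ref{trivial}) in each family, reducing in both cases to a quadratic inequality in $p$ that is easily seen to hold for $p\ge 2$. The only cosmetic difference is that the paper rewrites the ceiling as $\lfloor(\sum_m D/d_m-2)/2\rfloor+1$ and compares against the single target $d_1+D/d_1+1$ (which dominates both parity cases of~(\ref{trivial})), whereas you keep the ceiling, split by parity, and pass through the unrounded $K'$; the computations and the final inequalities are otherwise the same.
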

\begin{proof}
  In both cases, we prove the inequality 
  \begin{align}\label{toprove}\fl{\sum_{m=1}^3\frac{D}{d_m}-2}{N-1}+1\geq d_1+\frac{D}{d_1}+1\,.
  \end{align}
When $(d_1,d_2,d_3)= (2p,2p,3p-1)$ with integer $p\geq 2$, we have
\begin{equation*}
       \fl{\sum_{m=1}^3\frac{D}{d_m}-2}{N-1}+1  = 8p^2-2p, 
\end{equation*}
and \begin{equation*}
     d_1+\frac{D}{d_1}+1=6p^2+1.
\end{equation*}
Hence, the inequality (\ref{toprove}) is equivalent to 
\begin{equation}\label{eq:geq_0}
    2p^2-2p-1\geq 0.
\end{equation}
Note that Eq.~\eqref{eq:geq_0} holds for any $p\geq 2$.  Hence, the bound (\ref{new}) is non-trivial  for every $p\geq 2$.

Let us now consider the case where $(d_1,d_2,d_3)= (2p-1,\tilde{d},3p-2)$ for some integer $p\geq 2$ and some integer $\tilde d$  satisfying $2p-1\leq \tilde{d}\leq 3p-2$.  In this case, we have 
\begin{equation*}
\fl{\sum_{m=1}^3\frac{D}{d_m}-2}{N-1}+1  =   \fl{\tilde{d}(5p-3)+6p^2-7p}{2}+1, 
\end{equation*}
 and 
\begin{equation*}
      d_1+\frac{D}{d_1}+1=2p+\tilde{d}(3p-2), 
\end{equation*}
and the inequality (\ref{toprove}) is equivalent to 
\begin{equation}\label{eq:ceil}
\left\lceil\frac{\tilde{d}(p-1)-6p^2+11p-2}{2}\right\rceil\leq 0.
\end{equation}
Notice that if Eq.~\eqref{eq:ceil} holds for  $\tilde{d}=3p-2$, then it  holds for any $2p-1\leq \tilde{d}\leq 3p-2$.
Since
\begin{equation}
   (3p-2)(p-1)-6p^2+11p-2=-3p^2+6p\leq 0
\end{equation}
for any $p\geq 2$, then Eq.~\eqref{eq:ceil} holds for any $p\geq 2$ and $2p-1\leq \tilde{d}\leq 3p-2$.  Therefore,  Eq. (\ref{toprove}) holds for any $p\geq 2$ and $2p-1\leq \tilde{d}\leq 3p-2$, meaning that the bound (\ref{new}) is non-trivial for these values. 
\end{proof}

\medskip

We now provide the proofs of  Propositions \ref{pro:shakinghand} and \ref{pro:regular_graph}.  The proofs are presented in inverted order, because the proof of Proposition \ref{pro:shakinghand} uses Proposition \ref{pro:regular_graph} as an intermediate step.

\setcounter{proposition}{4}
\begin{proposition}
For a minimal  GUPB saturating the bound (\ref{new}),  the  orthogonality graph $G_m$ is a  $(k-\frac{D}{d_m})$-regular graph for every $m\in  \{1,\dots, N\}$.
\end{proposition}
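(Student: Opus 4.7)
The plan is to exploit the tight chain of inequalities already set up in the proof of Theorem~\ref{thm:lower_bound} and observe that when the overall bound is saturated, every inequality in the chain must be an equality, which forces the claimed regularity on each $G_m$.

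First I would fix an arbitrary vertex $v_i \in V$ and collect two bounds on $\sum_{m=1}^N \deg_{G_m}(v_i)$. For the upper direction, I would apply Lemma~\ref{lem:deg} to the bipartition $A_m \mid \{A_1\cdots A_N\}\setminus\{A_m\}$ (viewing $\cU$ as a bipartite UPB, which it is because $\cU$ is a GUPB). This yields $\deg_{G_m}(v_i)\leq k - \tfrac{D}{d_m}$, exactly as in Eq.~\eqref{eq:di}. Summing over $m$ gives
\begin{equation*}
\sum_{m=1}^N \deg_{G_m}(v_i) \;\leq\; Nk - \sum_{m=1}^N \tfrac{D}{d_m}.
\end{equation*}
For the lower direction, condition (i) of Lemma~\ref{lemma:suf_nec} says $\bigcup_{m=1}^N G_m = K_k$, so every other vertex lies in at least one neighborhood $N_{G_m}(v_i)$; hence $\sum_{m=1}^N \deg_{G_m}(v_i) \geq k-1$.

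Next I would substitute the saturation hypothesis $k(N-1) = \sum_{m=1}^N \tfrac{D}{d_m} - 1$, which rearranges to $Nk - \sum_{m=1}^N \tfrac{D}{d_m} = k - 1$. This collapses the two bounds into the equality
\begin{equation*}
\sum_{m=1}^N \deg_{G_m}(v_i) \;=\; k-1 \;=\; \sum_{m=1}^N \Bigl(k-\tfrac{D}{d_m}\Bigr).
\end{equation*}
Since each summand on the left is bounded above by the corresponding summand on the right, the only way the sums can match is if each individual term already satisfies $\deg_{G_m}(v_i) = k - \tfrac{D}{d_m}$. As $v_i$ and $m$ were arbitrary, this is precisely the statement that $G_m$ is $(k - D/d_m)$-regular for every $m$.

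There is really no hard step here once the chain from Theorem~\ref{thm:lower_bound} is in place; the only thing to be careful about is the accounting that turns the saturation of bound~\eqref{new} into the exact identity $Nk - \sum_m D/d_m = k-1$, and the elementary observation that a sum of upper-bounded terms attaining its total upper bound must attain it termwise. As an immediate by-product of the same calculation, equality in $\sum_m \deg_{G_m}(v_i) = k-1$ with $\bigcup_m G_m = K_k$ forces the neighborhoods $N_{G_m}(v_i)$ to be pairwise disjoint, so that the edge sets $(E_m)_{m=1}^N$ actually partition $E(K_k)$ --- a remark worth recording since it will be useful in the construction strategy of Sec.~\ref{sec:gupbs}.
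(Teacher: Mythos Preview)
Your proposal is correct and follows essentially the same approach as the paper's own proof: apply Lemma~\ref{lem:deg} to each bipartition $A_m\mid\{A_1\cdots A_N\}\setminus\{A_m\}$ for the per-vertex upper bound, use $\bigcup_m G_m=K_k$ for the lower bound $\sum_m\deg_{G_m}(v_i)\geq k-1$, and observe that saturation of bound~\eqref{new} forces equality termwise. Your additional remark that the neighborhoods $N_{G_m}(v_i)$ are pairwise disjoint (so the $E_m$ partition $E(K_k)$) is a correct and useful observation not stated explicitly in the paper's proof.
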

\begin{proof}
For the bound (\ref{new}) to be saturated, we must have $k =  \frac{\sum_{m=1}^N\frac{D}{d_m}-1}{N-1}$, or equivalently,    
\begin{align}\label{minimality}
(N-1) \,k =  {\sum_{m=1}^N\frac{D}{d_m}-1} \,.
\end{align}
Since $\{\ket{\varphi_1^{(i)}}_{A_1} \ket{\varphi_2^{(i)}}_{A_2} \cdots \ket{\varphi_N^{(i)}}_{A_N}  \}_{i=1}^k$ is still a UPB in the bipartition $A_m\mid  \{A_1A_2\cdots A_N\}\setminus\{A_m\}$ for $1\leq m\leq N$, then by Lemma~\ref{lem:deg}, the degree of every vertex $v_i$ of $G_m$  satisfies
\begin{equation}\label{eq:degvi_kminusDdm}
    {\rm deg}_{G_{m}}(v_i)\leq k-\frac{D}{d_m}.
\end{equation}
On the other hand, the minimality condition (\ref{minimality}) implies
\begin{equation}\label{eq:k_1}
\sum_{m=1}^N(k-\frac{D}{d_m})= k-1  \leq \sum_{m=1}^N{\rm deg}_{G_{m}}(v_i),
\end{equation}
where the inequality comes from  $\bigcup_{m=1}^NG_m = K_k$. Combining the two inequalities above, we obtain ${\rm deg}_{G_{m}}(v_i)= k-\frac{D}{d_m}$ for $1\leq m\leq N$.
\end{proof}

\setcounter{proposition}{3}
\begin{proposition}
  If  at least one of the local dimensions $(d_m)_{m=1}^N$ is even and the sum $\sum_{m=1}^N\frac{D}{d_m}-1$ is an odd multiple of $N-1$, then  the size of any GUPB in $\bbC^{d_1}\otimes \bbC^{d_2}\otimes\cdots\otimes \bbC^{d_N}$ is lower bounded as
\begin{equation}
       k\geq \frac{\sum_{m=1}^N\frac{D}{d_m}-1}{N-1}+1.
\end{equation}
\end{proposition}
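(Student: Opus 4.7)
The plan is to argue by contradiction using a parity-of-degree argument, in the spirit of the Alon--Lov\'asz refinement of the Bennett \emph{et al.}\ bound. Suppose that a GUPB of size exactly $k_0 := \frac{\sum_{m=1}^N \frac{D}{d_m} - 1}{N-1}$ exists; by hypothesis $k_0$ is an odd integer. Then by Proposition~\ref{pro:regular_graph}, each orthogonality graph $G_m$ must be $(k_0 - \frac{D}{d_m})$-regular.

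Next I would invoke the handshaking lemma on each $G_m$: the sum of vertex degrees equals $k_0 \cdot (k_0 - \frac{D}{d_m}) = 2|E_m|$, which must therefore be even. Since $k_0$ is odd, this forces $k_0 - \frac{D}{d_m}$ to be even, i.e., $\frac{D}{d_m}$ must be odd for every $m \in \{1,\dots,N\}$. However, by hypothesis there exists some index $m_0$ with $d_{m_0}$ even; then for any $m \neq m_0$ the quotient $\frac{D}{d_m}$ contains $d_{m_0}$ as a factor and is therefore even, contradicting the parity constraint just derived. Hence no GUPB can saturate the bound (\ref{new}) under the stated conditions, and we conclude $k \geq k_0 + 1$.

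I do not expect any real obstacle in this argument: the delicate point is merely to verify that an index $m \neq m_0$ always exists, which is automatic since GUPBs are only defined for $N \geq 2$ (in fact $N \geq 3$ in the nontrivial setting). The substantive content of the proof is really packaged inside Proposition~\ref{pro:regular_graph}; once regularity of each $G_m$ is in hand for minimal GUPBs, the handshaking lemma forces a parity constraint that is incompatible with having both an even local dimension and $k_0$ odd.
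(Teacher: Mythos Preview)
Your proof is correct and essentially identical to the paper's: both argue by contradiction, invoke Proposition~\ref{pro:regular_graph} to conclude that each $G_m$ is $(k_0-\frac{D}{d_m})$-regular on $k_0$ vertices, apply the handshaking lemma to force $\frac{D}{d_m}$ odd for all $m$ (since $k_0$ is odd), and then observe that an even $d_{m_0}$ makes $\frac{D}{d_{m'}}$ even for any $m'\neq m_0$. The only step you leave implicit is that $k\ge k_0$ already holds by Theorem~\ref{thm:lower_bound}, so ruling out $k=k_0$ indeed yields $k\ge k_0+1$; the paper leaves this implicit as well.
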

\begin{proof}
  We prove that, when the local dimensions satisfy the above conditions, the bound (\ref{new}) cannot hold with the equality sign.     
  If a GUPB saturated the bound (\ref{new}), its size should be   $k'=\frac{\sum_{m=1}^N\frac{D}{d_m}-1}{N-1}$ and $k'$ is odd.  
  This means that each orthogonality graph $G_m$ is a $(k'-\frac{D}{d_m})$-regular graph with $k'$ vertices  by Proposition~\ref{pro:regular_graph}. By Handshaking Lemma \cite{gunderson2010handbook},  $k'(k'-\frac{D}{d_m})$ must be even for each $1\leq m\leq N$. Assume $d_{m}$ is even, then $\frac{D}{d_{m'}}$ is even for $m'\neq m$. In this case, $k'(k'-\frac{D}{d_{m'}})$ is odd, and  this is impossible. Thus a GUPB of size  $k'$ does not exist. This completes the proof. 
\end{proof}

\end{document}